\newtheorem{theorem}{Theorem}
\newtheorem{lemma}{Lemma}
\newtheorem{claim}{Claim}
\newtheorem{corollary}{Corollary}
\newtheorem{definition}{Definition}
\newtheorem{observation}{Observation}
\newtheorem{proposition}{Proposition}
\theoremstyle{definition}
\newlength{\RoundedBoxWidth}
\newsavebox{\GrayRoundedBox}
\newenvironment{GrayBox}[1]%
   {\setlength{\RoundedBoxWidth}{.93\textwidth}
    \def\boxheading{#1}
    \begin{lrbox}{\GrayRoundedBox}
       \begin{minipage}{\RoundedBoxWidth}}%
   {   \end{minipage}
    \end{lrbox}
    \begin{center}
    \begin{tikzpicture}%
       \node(Text)[draw=black!20,fill=white,rounded corners,%
             inner sep=2ex,text width=\RoundedBoxWidth]%
             {\usebox{\GrayRoundedBox}};
        \coordinate(x) at (current bounding box.north west);
        \node [draw=white,rectangle,inner sep=3pt,anchor=north west,fill=white] 
        at ($(x)+(6pt,.75em)$) {\boxheading};
    \end{tikzpicture}
    \end{center}}
\newenvironment{defproblemx}[2][]{\noindent\ignorespaces%
                                \FrameSep=6pt%
                                \parindent=0pt%
                \vspace*{-1.5em}
                \ifthenelse{\isempty{#1}}{%
                  \begin{GrayBox}{\textsc{#2}}%
                }{%
                  \begin{GrayBox}{\textsc{#2} parameterized by~{#1}}%
                }
                \begin{tabular*}{\textwidth}{@{\hspace{.1em}} >{\itshape} p{1.8cm} p{0.8\textwidth} @{}}%
            }{
                \end{tabular*}%
                \end{GrayBox}%
                \ignorespacesafterend
            }  
\newenvironment{defproblemxb}[2][]{\noindent\ignorespaces
  \FrameSep=6pt%
  \parindent=0pt%
  \vspace*{-1.5em}
  \ifthenelse{\isempty{#1}}{%
    \begin{GrayBox}{\textsc{#2}}%
    }{%
      \begin{GrayBox}{\textsc{#2} parameterized by~{#1}}%
      }
      \begin{tabular*}{\textwidth}{@{\hspace{.1em}} >{\itshape} p{1.2cm} p{0.85\textwidth} @{}}%
      }{
      \end{tabular*}%
    \end{GrayBox}%
    \ignorespacesafterend
  }
\newcommand{\pname}{\textsc}
\newcommand{\ProblemFormat}[1]{\pname{#1}}
\newcommand{\ProblemIndex}[1]{\index{problem!\ProblemFormat{#1}}}
\newcommand{\ProblemName}[1]{\ProblemFormat{#1}\ProblemIndex{#1}{}\xspace}
\newcommand{\probpropfair}{\ProblemName{Proportionally Fair Matching}}  
\newcommand{\lr}[1]{\left(#1\right)}
\newcommand{\LR}[1]{\left\{#1\right\}}
\newcommand{\real}{\mathbb{R}}
\newcommand{\np}{\textsf{NP}}
\newcommand{\prob}{\mathsf{Pr}}
\begin{document}

\title{Proportionally Fair Matching with Multiple Groups}

\author{Sayan Bandyapadhyay \thanks{Portland State University, USA}
	\and
	Fedor V. Fomin\thanks{University of Bergen, Norway}.
\and
Tanmay Inamdar\thanks{University of Bergen, Norway}
\and
Kirill Simonov\thanks{Technische Universität Wien, Austria}
}

\date{}

\maketitle

\begin{abstract}
  The study of fair algorithms has become mainstream in machine learning and artificial intelligence due to its increasing demand in dealing with biases and discrimination. Along this line, researchers have considered  fair versions of traditional optimization problems including clustering, regression, ranking and voting. However, most of the efforts have been channeled into designing heuristic algorithms,  which often do not provide any guarantees on the quality of the solution. In this work, we study matching problems with the notion of proportional fairness. Proportional fairness is one of the most popular notions of group fairness where every group is represented up to an extent proportional to the final selection size. Matching with proportional fairness or more commonly, proportionally fair matching, was introduced in  [Chierichetti et al., AISTATS, 2019], where the problem was studied with only two groups. However, in many practical applications, the number of groups---although often a small constant---is larger than two. In this work, we make the first step towards understanding the computational complexity of proportionally fair matching with more than two groups. We design exact and approximation algorithms achieving reasonable guarantees on the quality of the matching as well as on the time complexity. Our algorithms are also supported by suitable hardness bounds. 
\end{abstract}

\section{Introduction}
Machine learning (ML) algorithms are ubiquitous in today's world,  constantly playing crucial roles in decision-making which has an immeasurable impact on human lives. These algorithms trained on past instances are extremely powerful and most of the time output correct solutions without making any error. However, in recent times, these algorithms have faced critiques for being biased towards underrepresented groups \cite{angwin2019machine,datta2015automated,garb1997race,BuolamwiniG18}. Consequently, researchers have made efforts in understanding how biases are introduced in the ML pipeline and whether it is possible to get rid of them. This research has given rise to an entire subfield called \emph{fairness} in ML. All the work done so far in this budding subfield can broadly be classified into two types. The first one studies different notions of fairness and their interactions \cite{calders2010three,dwork2012fairness,feldman2015certifying,crowson2016assessing,kleinberg2017inherent,DBLP:conf/kdd/Corbett-DaviesP17}. These works essentially show that there is no universal definition of fairness that captures all the scenarios and it is not possible to satisfy different fairness notions simultaneously. In the second type of works, researchers have studied fair versions of classical problems incorporating suitable notions of fairness from the first type. Notably the problems considered include clustering \cite{chierichetti2017fair,huang2019coresets,bera2019fair,DBLP:journals/corr/abs-2007-10137}, regression \cite{DBLP:conf/nips/JosephKMR16,DBLP:journals/corr/BerkHJJKMNR17,agarwal2019fair}, ranking \cite{DBLP:conf/icalp/CelisSV18}, voting \cite{DBLP:conf/ijcai/CelisHV18} and matching \cite{DBLP:conf/aistats/Chierichetti0LV19}. 

In this paper, we consider the proportionally fair matching problem. Matching appears naturally in several applications in ML, e.g., assigning products to customers \cite{ristoski2018machine}; students to schools \cite{kurata2017controlled}; reviewers to manuscripts \cite{charlin2013toronto}; and workers to firms \cite{DBLP:conf/ijcai/AhmadiADFK20}. There are scores of works that study fair versions of matchings \cite{DBLP:conf/aistats/Chierichetti0LV19,DBLP:journals/algorithmica/HuangKM016,DBLP:journals/datamine/Garcia-SorianoB20,kamada2020fair,klaus2006procedurally,freemantwo}. Among these distinct notions of matchings, our work is most relevant to $(\alpha,\beta)$-balanced matching \cite{DBLP:conf/aistats/Chierichetti0LV19}. $(\alpha,\beta)$-balanced matching was formulated by \cite{DBLP:conf/aistats/Chierichetti0LV19} by bringing \emph{proportional} fairness and maximum cardinality matching together. Proportional fairness is based on the concept of disparate impact \cite{feldman2015certifying}, which in the context of matching is defined as follows. A matching is $(\alpha,\beta)$-balanced or proportionally fair if the ratio between the number of edges from each group and the size of the matching is at least $\alpha$ and at most $\beta$. 

As a motivating example of proportionally fair matching, 
consider the product recommendation problem in e-commerce. With the advancement of digital marketing and advertising, nowadays companies are interested in more fine-tuned approaches that help them reach the target groups of customers. These groups may be representative of certain underlying demographic categorizations into based on gender, age group, geographic location etc. Thus, the number of groups is often a small constant. In particular, in this contemporary setting, one is interested in finding assignments  that involve customers from all target groups and have a balanced impact on all these groups. This assignment problem can be modeled as the proportionally fair matching problem between customers and products. In a realistic situation, one might need to assign many products to a customer and many customers to a product. This can be achieved by computing multiple matchings in an iterative manner while removing the edges from the input graph that are already matched.  

In a seminal work, \cite{DBLP:conf/aistats/Chierichetti0LV19} obtained a polynomial-time $3/2$-approximation when the number of groups is 2.
However, in many real-world situations, like in the above example, it is natural to assume that the number of target groups is more than 2. Unfortunately, the algorithm of \cite{DBLP:conf/aistats/Chierichetti0LV19} strongly exploits the fact that the number of groups $\ell=2$.  
It is not clear how to adapt or extend their algorithm when we have more than two groups. 
The only known algorithm prior to our work for $\ell>2$ groups was an $n^{O(\ell)}$-time randomized exact algorithm \cite{DBLP:conf/aistats/Chierichetti0LV19}. The running time of this algorithm  has a ``bad'' exponential dependence on the number of groups, i.e., the running time is not a \emph{fixed} polynomial in $n$. Thus, this algorithm quickly becomes impractical if  $\ell$ grows. Our research on proportionally fair matching is driven by the following question. 
\emph{Do there exist efficient algorithms with guaranteed performance for proportionally fair matching when the number of groups $\ell$ is more than two?}

\subsection{Our results and contributions } In this work, we obtain several results on the \probpropfair problem with any arbitrary $\ell$ number of groups. 
\begin{itemize}
    \item First, we show that the problem is extremely hard for any general $\ell$ number of groups, in the sense that it is not possible to obtain any approximation algorithm in $2^{o(\ell)}n^{O(1)}$ time even on path graphs, unless the Exponential Time Hypothesis (ETH) \cite{impagliazzo2001complexity} is false. 
\item To complement our hardness result, we design a  $1/4\ell$-approximation algorithm that runs in $2^{O(\ell)}n^{O(1)}$ time. Our algorithm might violate the lower ($\alpha$) and upper ($\beta$) bounds by at most a multiplicative factor of $(1+4\ell/|\text{OPT}|)$ if $|$OPT$|$ is more than $4{\ell}^2$, where OPT is any  optimum solution. Thus, the violation factor is at most $1+1/\ell$, and tends to 1 with asymptotic values of $|$OPT$|$. 
\item We also consider a restricted case of the problem, referred to as the $\beta$-limited case in \cite{DBLP:conf/aistats/Chierichetti0LV19}, where we only have the upper bound, i.e., no edges might be present from some groups. In this case, we could improve the approximation factor to $1/2\ell$ and running time to polynomial. 
\item Lastly, we show that the parameterized version of the problem where one seeks for a proportionally fair matching of size $k$, can be solved exactly
in $2^{O(k)} n^{O(1)}$ time. Thus the problem is fixed-parameter tractable parameterized by $k$.
\end{itemize}
All of our algorithms are based on simple schemes. Our approximation algorithms use an iterative peeling scheme that in each iteration, extracts a \textit{rainbow} matching containing at most one edge from every group. The exact algorithm is based on a non-trivial application of the celebrated color-coding scheme \cite{alon1995color}. These algorithms appear in Sections \ref{sec:prop-approx},  \ref{sec:beta-limited}, and  \ref{sec:exact}, respectively. The hardness proof is given in Section \ref{sec:hardness}.


\subsection{Related work} In recent years, researchers have introduced and studied several different notions of fairness, e.g., disparate impact \cite{feldman2015certifying}, statistical parity \cite{DBLP:conf/kdd/ThanhRT11,DBLP:conf/icdm/KamishimaAS11}, individual fairness \cite{dwork2012fairness} and group fairness \cite{dwork2018group}. Kleinberg et al. \cite{kleinberg2017inherent} formulated three notions of fairness and showed that it is theoretically impossible to satisfy them simultaneously. See also \cite{DBLP:conf/kdd/Corbett-DaviesP17,chouldechova2017fair} for similar exposures. 

The notion of proportional fairness with multiple protected groups is widely studied in the literature, which is based on disparate impact \cite{feldman2015certifying}. Bei~et al.~\cite{BeiLPW20} studied the \textit{proportional candidate selection problem}, where the goal is to select a subset of candidates with various attributes from a given set while satisfying certain proportional fairness constraints. Goel~et al.~\cite{GoelYF18} considered the problem of learning non-discriminatory and proportionally fair classifiers and proposed the \textit{weighted sum of logs} technique. Proportional fairness has also been considered in the context of Federated learning \cite{zhang2022equality}. Additionally, proportional fairness has been studied in the context of numerous optimization problems including voting \cite{ebadian2022optimized}, scheduling \cite{kesavan2022proportional,lu2022optimization}, Kidney exchange \cite{st2022adaptation}, and Traveling Salesman Problem \cite{nguyen2022nash}. 

Several different fair matching problems have been studied in the literature. \cite{DBLP:journals/algorithmica/HuangKM016} studied fair $b$-matching, where matching preferences for each vertex are given as ranks, and the goal is to avoid assigning vertices to high ranked preferences as much as possible. Fair-by-design-matching is studied in \cite{DBLP:journals/datamine/Garcia-SorianoB20}, where instead of a single matching, a probability distribution over all feasible matchings is computed which guarantees individual fairness. See also \cite{kamada2020fair,klaus2006procedurally}. 


Apart from the fair versions of matchings, many constrained versions are also studied \cite{DBLP:conf/mfcs/Stamoulis14,DBLP:journals/mp/BergerBGS11}. \cite{DBLP:conf/mfcs/Stamoulis14} studied the Bounded Color Matching (BCM) problem where edges are colored and from each color class, only a given number of edges can be chosen. BCM is a special case of 3-set packing and, hence, admits a $3/4$-approximation \cite{DBLP:conf/mfcs/Stamoulis14}. We note that the $\beta$-limited case of \probpropfair is a special case of BCM and, thus, a $3/4$-approximation follows in this case where the upper bound might be violated by $3/4$ factor. One should compare this factor with our violation factor, which asymptotically tends to 1.

\section{Preliminaries}
\label{sec:prelims}
For an integer $\ell \ge 1$, let $[\ell] \coloneqq \{1, 2, \ldots, \ell\}$.
Consider any undirected $n$-vertex graph $G=(V,E)$ such that the edges in $E$ are colored by colors in $C=\{1,\ldots,\ell\}$. The function $\chi: E \rightarrow C$ describes the color assignment. For each color $c \in C$, let $E_c$ be the set of edges colored by the color $c$, i.e., $E_c={\chi}^{-1}(c)$. A subset $E'\subseteq E$ is a \textit{matching} in $G$ if no two edges in $E'$ share a common vertex. 
\begin{definition}
\textbf{$(\alpha,\beta)$-balanced matching}. Given $0\le \alpha \le \beta \le 1$, a matching $M\subseteq E$ is called $(\alpha,\beta)$-balanced if for each color $c\in C$, we have that \quad $\displaystyle \alpha \le \frac{|M\cap E_c|}{|M|}\le \beta.$    
\end{definition}
Thus a matching is $(\alpha,\beta)$-balanced if it contains at least $\alpha$ and at most $\beta$ fraction of edges from every color. In the \probpropfair problem, the goal is to find a maximum-sized $(\alpha,\beta)$-balanced matching. In the restricted $\beta$-limited case of the problem, $\alpha=0$, i.e., we only have the upper bound.

For $\gamma \le 1$ and $\Delta\ge 1$, a $(\gamma,\Delta)$-approximation algorithm for \probpropfair computes a matching of size at least $\gamma\cdot |$OPT$|$, where every color appears in at least ${\alpha}/\Delta$ fraction of the edges and in at most $\beta\cdot \Delta$ fraction. OPT is an optimum $(\alpha,\beta)$-balanced matching. 

A matching is called a \textit{rainbow matching} if all of its edges have distinct colors. We will need the following result due to Gupta et al.~\cite{GuptaRSZ19}. 
\begin{theorem}\label{thm:rainbow}
For some integer $k > 0$, suppose there is a rainbow matching in $G$ of size $k$. There is a $2^k\cdot n^{O(1)}$ time algorithm that computes a rainbow matching of size $k$. 
\end{theorem}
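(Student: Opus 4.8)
The goal is to decide, and then actually find, a matching $M$ with $|M|=k$ whose edges all carry distinct colors. The plan is to separate the two constraints a rainbow matching must satisfy: \emph{(i)} the edges are pairwise vertex-disjoint (a matching), and \emph{(ii)} the edges use $k$ pairwise distinct colors. Constraint (i) is exactly the kind of structure that algebraic matching machinery handles ``for free,'' whereas constraint (ii) is a multilinearity (distinctness) requirement on the $k$ chosen colors, which is where the $2^k$ factor should come from.

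First, for the detection version I would encode matchings algebraically. Introduce a formal indeterminate $t_e$ for every edge $e$ and a variable $y_c$ for every color $c\in C$, and build the skew-symmetric Tutte-type matrix with $(u,v)$-entry $\sum_{e=uv} t_e\, y_{\chi(e)}$ (working over a field of characteristic two so that signs disappear). After padding the graph with $n-2k$ auxiliary vertices joined to all originals by color-free edges (so that size-$k$ matchings extend to perfect matchings of the padded graph), the determinant of this matrix is a sum over perfect matchings, and each surviving monomial records, through the product of its $y_{\chi(e)}$, exactly which multiset of colors a size-$k$ matching uses. Crucially, the vertex-disjointness in (i) is enforced automatically by the determinantal expansion, so I never have to track used vertices explicitly.

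It then remains to detect, inside this polynomial, a monomial that is \emph{multilinear} of degree $k$ in the color variables $y_c$ --- such a monomial exists precisely when some size-$k$ matching uses $k$ distinct colors, i.e.\ is rainbow. This is a multilinear-monomial detection task, and the standard sieving technique solves it by summing evaluations of the polynomial over $2^k$ substitutions indexed by subsets of the $k$ ``color slots'' (an inclusion--exclusion that cancels every monomial in which some color repeats), each evaluation being a single determinant computation in polynomial time; randomizing the $t_e$ over a large enough extension field and invoking the Schwartz--Zippel lemma controls the one-sided error. The total running time is $2^k\cdot n^{O(1)}$. To turn detection into construction I would use the usual self-reduction: repeatedly test whether a rainbow matching of size $k$ survives after forbidding a candidate edge, and keep those edges whose removal destroys all solutions, at a polynomial multiplicative overhead. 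As an optional preprocessing step one may apply color-coding \cite{alon1995color} to merge the $\ell$ colors into $k$ buckets (reducing to the case $\ell=k$), but this is not essential, since the sieve already pays only for the $k$ target colors.

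The main obstacle is the algebraic bookkeeping that makes the three requirements --- vertex-disjointness, \emph{exactly} $k$ edges, and \emph{distinct} colors --- cohere in a single object, so that precisely the rainbow $k$-matchings contribute a nonzero term while all spurious contributions (matchings of the wrong size, and matchings that repeat a color) cancel. Getting the repeated-color terms to vanish is the delicate point: it is what forces the characteristic-two and inclusion--exclusion setup, and what pins the exponent at $2^k$ rather than at the $2^{O(k)}$ that a naive color-coding dynamic program would give. Verifying that the padding and the field size keep the failure probability and the arithmetic cost within $n^{O(1)}$ is then routine.
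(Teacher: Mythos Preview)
The paper does not prove this statement: it is quoted in the preliminaries as a known result due to Gupta, Roy, Saurabh, and Zehavi~\cite{GuptaRSZ19} and is used only as a black-box subroutine in the approximation and exact algorithms that follow. There is therefore no in-paper argument to compare your proposal against.

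That said, your high-level plan---encode matchings algebraically via a Tutte-type matrix, attach a color variable $y_{\chi(e)}$ to every edge, and test for a degree-$k$ multilinear monomial in the color variables---is indeed the approach behind the cited $2^{k}n^{O(1)}$ bound. Two technical remarks. First, the determinant of the Tutte matrix is the \emph{square} of the Pfaffian, not itself a sum over perfect matchings; over characteristic two this means every $y_c$ would occur with even exponent in the determinant, and naive multilinear detection applied there sees nothing. You want to compute the Pfaffian directly (also polynomial time). Second, when $\ell>k$ the phrase ``$2^{k}$ substitutions indexed by subsets of the $k$ color slots'' needs care: the correct reading is evaluation over the group algebra of $\mathbb{Z}_2^{k}$ in the style of Koutis--Williams, whose dimension $2^{k}$ supplies the exponent, rather than a literal inclusion--exclusion over some fixed $k$-element set of colors. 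Neither point breaks the overall scheme.
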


\section{A $(\frac{1}{4\ell},1+\frac{4\ell}{|OPT|})$-Approximation for \probpropfair}
\label{sec:prop-approx}

In this section, we design an approximation algorithm for \probpropfair. Let OPT be an optimum $(\alpha,\beta)$-balanced matching, ${\text{OPT}}_c=\text{OPT}\cap E_c$. We design two algorithms: one for  the case when $\alpha > 0$ and the other for the complementary $\beta$-limited case. In this section, we slightly abuse the notation, and use OPT (resp.\ OPT$_c$ for some color $c \in C$) to refer to $|$OPT$|$ (resp.\ $|$OPT$_c|$). The intended meaning should be clear from the context; however we will disambiguate in case there is a possibility of confusion.

First, we consider the $\alpha > 0$ case. Immediately, we have the following observation. 

\begin{observation} \label{obs:rainbowexists}
For any color $c\in C$, \emph{OPT} contains at least one edge of color $c$ and, hence, $G$ contains a rainbow matching of size $\ell$.  
\end{observation}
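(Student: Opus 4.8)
The plan is to read off both assertions directly from the definition of an $(\alpha,\beta)$-balanced matching, using the hypothesis $\alpha > 0$ together with the fact that edge-counts are integers. No clever construction is needed; everything follows from the lower-bound half of the balance constraint applied to the optimum matching OPT.

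First I would establish the per-color claim. Since OPT is $(\alpha,\beta)$-balanced, the lower bound gives, for every color $c \in C$,
\[
  \frac{|\text{OPT}_c|}{|\text{OPT}|} \ge \alpha,
  \qquad\text{i.e.}\qquad
  |\text{OPT}_c| \ge \alpha\cdot |\text{OPT}|.
\]
Here I implicitly use that $|\text{OPT}| \ge 1$: a nonempty $(\alpha,\beta)$-balanced matching exists in the regime under consideration (otherwise the instance is trivial), and the ratio in the definition is only meaningful for a nonempty matching, so OPT is nonempty and the displayed inequality is well defined. Because $\alpha > 0$ and $|\text{OPT}| \ge 1$, the right-hand side is strictly positive. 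As $|\text{OPT}_c|$ is a nonnegative integer that is strictly greater than $0$, it must satisfy $|\text{OPT}_c| \ge 1$. Thus OPT contains at least one edge of each color $c$, which is the first part of the statement.

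For the ``hence'' part, I would simply choose, for each color $c \in C$, an arbitrary representative edge $e_c \in \text{OPT}_c = \text{OPT} \cap E_c$, which is possible precisely because each $\text{OPT}_c$ is nonempty. Since each edge receives exactly one color under $\chi$, the color classes $E_c$ are pairwise disjoint, so the $\ell$ chosen edges are distinct and carry $\ell$ distinct colors. Moreover, all of them belong to the matching OPT, hence they are pairwise vertex-disjoint. Therefore $\{e_c : c \in C\}$ is a matching in $G$ all of whose edges have distinct colors, i.e.\ a rainbow matching, and its size is $|C| = \ell$.

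I do not anticipate a genuine obstacle here, as the argument is essentially a one-line consequence of the balance constraint. The only points that require a moment's care are the well-definedness of the ratio (which forces the convention $|\text{OPT}| \ge 1$) and the integrality step that upgrades the strict positivity $|\text{OPT}_c| > 0$ to $|\text{OPT}_c| \ge 1$; both are routine. This observation is useful downstream because it certifies, via \Cref{thm:rainbow}, that a rainbow matching of size $\ell$ can actually be extracted in $2^{\ell}\cdot n^{O(1)}$ time, seeding the iterative peeling scheme used by the approximation algorithm.
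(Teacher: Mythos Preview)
Your proposal is correct and matches the paper's intent exactly: the paper states this observation without proof, treating it as immediate from the assumption $\alpha > 0$, and your argument spells out precisely that reasoning (positive lower bound forces $|\text{OPT}_c| \ge 1$, then pick one edge per color from OPT to form a rainbow matching).
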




Our algorithm runs in \textit{rounds}. In the following, we define a round. The input in each round is a subgraph $G'=(V',E')$ of $G$.  

\medskip\noindent\textbf{Round.} Initially $M=\emptyset$. For every color $1\le c\le \ell$, do the following in an iterative manner. If there is no edge of color $c$ in $G'$, go to the next color or terminate and return $(G',M)$ if $c=\ell$. Otherwise, pick any edge $e$ of color $c$ from $G'$ and add $e$ to the already computed matching  $M$. Remove all the edges  (including $e$) from $G'$ that share a common vertex with $e$. Repeat the process for the next color with the current (or updated) graph $G'$ or terminate and return $(G',M)$ if $c=\ell$. 

Thus in each round, we try to pick a rainbow matching in a greedy manner. Next, we describe our algorithm. The most challenging part of our algorithm is to ensure that the final matching computed is $(\alpha,\beta)$-balanced modulo a small factor, i.e., we need to ensure both the lower and the upper bounds within a small factor for each color. Note that just the above greedy way of picking edges might not even ensure that at least one edge is selected from each color. We use the algorithm of \cite{GuptaRSZ19} in the beginning to overcome this barrier. However, the rest of our algorithm is extremely simple.

\medskip\noindent\textbf{The Algorithm.} We assume that we know the size of OPT. We describe later how to remove this assumption. Apply the algorithm in Theorem \ref{thm:rainbow} on $G$ to compute a rainbow matching $M'$ of size $\ell$. If OPT $\le 4\ell^2$, return $M\coloneqq M'$ as the solution and terminate. Otherwise, remove all the edges of $M'$ and the edges adjacent to them from $G$ to obtain the graph $G_0$. Initialize $M$ to $M'$. Greedily pick matched edges in rounds using the \textbf{Round} procedure and add them to $M$ until exactly $\lceil\text{OPT}/(4\ell)\rceil$ edges are picked in total. In particular, the graph $G_{0}$ is the input to the $1$-st round and $G_1$ is the output graph of the 1-st round. $G_1$ is the input to the 2-nd round and $G_2$ is the output graph of the 2-nd round, and so on. Note that it might be the case that the last round is not completed fully if the size of $M$ is reached to $\lceil\text{OPT}/(4\ell)\rceil$ before the completion of the round. 

Note that the above algorithm is oblivious to $\alpha$ and $\beta$ in the sense that it never uses these values. Nevertheless, we prove that the computed matching is $(\alpha,\beta)$-balanced modulo a small factor. Now we analyze our algorithm. 

\subsection{The Analysis}

Let $M_c=M\cap E_c$. Also, let $c^*$ be a color $c\in C$ such that $|\text{OPT}_{c}|$ is the minimum at $c=c^*$. 

\begin{observation}\label{obs:ab}
 $\alpha \le 1/\ell\le \beta$. 
\end{observation}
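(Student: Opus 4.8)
The plan is to exploit the single defining property of OPT: being an $(\alpha,\beta)$-balanced matching, it satisfies $\alpha \le |\text{OPT}_c|/|\text{OPT}| \le \beta$ for every color $c \in C$. The crucial structural fact is that every edge of the matching receives exactly one color, so the per-color counts partition the total edge count; that is, $\sum_{c\in C} |\text{OPT}_c| = |\text{OPT}|$. By Observation~\ref{obs:rainbowexists} (valid here since we are in the $\alpha > 0$ regime), OPT contains at least one edge of each color, so in particular $|\text{OPT}| > 0$ and all the ratios above are well defined.

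First I would divide the partition identity by $|\text{OPT}|$ to get $\sum_{c\in C} |\text{OPT}_c|/|\text{OPT}| = 1$; in words, the $\ell$ color-ratios average to exactly $1/\ell$. For the lower bound, summing the inequality $\alpha \le |\text{OPT}_c|/|\text{OPT}|$ over all $\ell$ colors yields $\ell\alpha \le 1$, hence $\alpha \le 1/\ell$. Symmetrically, summing $|\text{OPT}_c|/|\text{OPT}| \le \beta$ over all colors gives $1 \le \ell\beta$, hence $1/\ell \le \beta$. Chaining the two conclusions produces $\alpha \le 1/\ell \le \beta$, as claimed.

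There is essentially no obstacle beyond making the averaging step rigorous, and the only point requiring any care is that the balance inequalities are genuinely available for all $\ell$ colors simultaneously. This is immediate from the definition of an $(\alpha,\beta)$-balanced matching applied to OPT, and the well-definedness of the ratios is secured by Observation~\ref{obs:rainbowexists}. Notably, nothing about the algorithm, the \textbf{Round} procedure, or the rainbow-matching subroutine of Theorem~\ref{thm:rainbow} is needed: the statement is a purely arithmetic consequence of the balance constraints together with the fact that the color classes $\{E_c\}_{c\in C}$ partition the edges of the matching.
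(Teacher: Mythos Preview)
Your proof is correct and follows essentially the same averaging idea as the paper: the paper argues via the extremal colors (the minimum and maximum of $|\text{OPT}_c|$) to sandwich $1/\ell$ between $\alpha$ and $\beta$, whereas you sum the balance inequalities directly over all $\ell$ colors, but both routes amount to the same observation that the color-ratios average to $1/\ell$.
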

\begin{proof}
	Let $\hat{c}$ be a color $c\in C$ such that $|\text{OPT}_{c}|$ is the minimum at $c=\hat{c}$. 
	By definition, OPT $\ge \ell\cdot \text{OPT}_{c^*}$, or $\text{OPT}_{c^*}/\text{OPT}\le 1/\ell$. Thus, $\alpha \le \text{OPT}_{c^*}/\text{OPT} \le 1/\ell$. Similarly, OPT $\le \ell\cdot \text{OPT}_{\hat{c}}$, or $\text{OPT}_{\hat{c}}/\text{OPT}\ge 1/\ell$. Thus, $\beta \ge \text{OPT}_{\hat{c}}/\text{OPT} \ge 1/\ell$.
\end{proof}

First we consider the case when OPT $ \le 4\ell^2$. In this case the returned matching $M$ is a rainbow matching of size exactly $\ell$. The existence of such a matching follows by Observation \ref{obs:rainbowexists}. 
Thus, we immediately obtain a $4\ell$-approximation. As $|M_c|/|M|=1/\ell$ in this case, by Observation \ref{obs:ab}, $\alpha \le |M_c|/|M|\le \beta$. Thus we obtain the desired result. In the rest of the proof, we analyze the case when OPT $> 4\ell^2$. We start with the following lemma. 



\begin{lemma}\label{lem:matchingsize}
The algorithm successfully computes a matching of size exactly $\lceil\emph{OPT}/(4\ell)\rceil$. Moreover, for each color $c$ with $\emph{OPT}_c > 4\ell$ and round $i \in [1,\lceil\emph{OPT}_c/(4\ell)\rceil-1]$, $G_{i-1}$ contains an edge of color $c$.  
\end{lemma}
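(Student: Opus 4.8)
The plan is to prove both parts from a single counting principle. Since $\text{OPT}$ is a matching, every edge the algorithm commits to---whether it belongs to the initial rainbow matching $M'$ or is picked inside a round---destroys only boundedly many edges of $\text{OPT}$. Precisely, when an edge $e=(u,v)$ is added to $M$, the only $\text{OPT}$-edges deleted from the current graph are those incident to $u$ or to $v$; as $\text{OPT}$ matches each vertex at most once, at most one $\text{OPT}$-edge is lost at $u$ and at most one at $v$. Hence at most two $\text{OPT}$-edges disappear per committed edge, and at most two of any fixed color $c$. I would isolate this as the basic fact and then convert it into the two quantitative bounds.

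For the first statement I would show the algorithm can never stall before reaching the target size $K:=\lceil\text{OPT}/(4\ell)\rceil$. Suppose the current matching has size $t<K$; counting the committed edges ($\ell$ from $M'$ plus $t-\ell$ from the rounds) and applying the basic fact, at most $2t$ edges of $\text{OPT}$ have been removed. Since $t\le K-1<\text{OPT}/(4\ell)\le\text{OPT}/4$, we get $2t<\text{OPT}/2$, so strictly more than $\text{OPT}/2>0$ edges of $\text{OPT}$ still survive in the current graph, which is therefore not edgeless. I would then observe that a round whose input graph is non-empty always commits at least one edge: letting $c_0$ be the smallest color still present at the start of the round, colors $1,\dots,c_0-1$ are skipped without removing anything, so the color-$c_0$ edge is still available when color $c_0$ is processed and gets picked. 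Thus $|M|$ strictly increases every round until it hits $K$, and the algorithm stops exactly at $K$ (cutting the last round short if needed).

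For the second statement I would apply the same fact color-wise. Fix a color $c$ with $\text{OPT}_c>4\ell$ and an executed round $i$ with $1\le i\le\lceil\text{OPT}_c/(4\ell)\rceil-1$. By the time round $i$ begins, the algorithm has committed the $\ell$ edges of $M'$ and at most $\ell$ edges in each of the rounds $1,\dots,i-1$ (a round picks at most one edge per color), i.e.\ at most $i\ell$ edges in total. Each destroys at most two color-$c$ edges of $\text{OPT}$, so at most $2i\ell$ of the $\text{OPT}_c$ color-$c$ edges are gone. Since $i\le\lceil\text{OPT}_c/(4\ell)\rceil-1<\text{OPT}_c/(4\ell)$, this is strictly below $\text{OPT}_c/2$, leaving more than $\text{OPT}_c/2>0$ color-$c$ edges of $\text{OPT}$ inside $G_{i-1}$; in particular $G_{i-1}$ contains an edge of color $c$.

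I expect the main difficulty to be careful bookkeeping rather than a single hard idea: verifying that the ``destroys at most two $\text{OPT}$-edges'' accounting remains valid when $M'$ together with all its incident edges is deleted at once, that the per-round budget of $\ell$ edges is charged correctly, and that the inequalities $2t<\text{OPT}/2$ and $2i\ell<\text{OPT}_c/2$ are genuinely strict so that at least one surviving edge is guaranteed. I would also flag that the range $[1,\lceil\text{OPT}_c/(4\ell)\rceil-1]$ is to be read over rounds the algorithm actually performs (the hypothesis $\text{OPT}_c>4\ell$ makes this range non-empty, as it forces $\lceil\text{OPT}_c/(4\ell)\rceil\ge 2$); what the counting establishes is that every such executed round $i$ receives an input graph $G_{i-1}$ still containing a color-$c$ edge.
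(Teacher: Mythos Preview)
Your proposal is correct and rests on the same counting principle as the paper: each edge the algorithm commits is adjacent to at most two edges of $\text{OPT}$ (or of $\text{OPT}_c$), so committing $t$ edges deletes at most $2t$ of them. For the second statement your argument is essentially identical to the paper's (the paper phrases it as ``$M'$ kills at most $2\ell$, and each round kills at most $2\ell$''); for the first statement you take a slightly different and cleaner route---you bound the total number of surviving $\text{OPT}$-edges directly and observe that a round with non-empty input always commits at least one edge, whereas the paper argues per color that at least $\lceil\text{OPT}_c/(4\ell)\rceil$ edges of each color $c$ remain pickable and then sums over $c$. Both arguments are equivalent in spirit; yours avoids the per-color case split and the summation step, at the cost of not yielding the slightly stronger per-color bound the paper obtains along the way.
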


\begin{proof}
Note that by Observation \ref{obs:rainbowexists}, the algorithm in Theorem \ref{thm:rainbow} successfully computes a rainbow matching $M'$ of size $\ell$. Now consider any color $c$ such that $\text{OPT}_c \le 4\ell$. For such a color, $M$ already contains at least $1 \ge \lceil \text{OPT}_c/(4\ell)\rceil$ edge. Now consider any other color $c$ with $|\text{OPT}_c| > 4\ell$. Consider the rainbow matching $M'$ computed in the beginning. As $|M'|= \ell$, the edges of $M'$ can be adjacent to at most $2\ell$ edges from OPT, since it is a matching. In particular, the edges of $M'$ can be adjacent to at most $2\ell$ edges from the set $\text{OPT}_c$. Hence, $G_0$ contains at least $\text{OPT}_c-2\ell$ edges of the set $\text{OPT}_c$. Now consider the execution of round $i\ge 1$. At most $\ell$ edges are chosen in this round. Hence, these edges can be adjacent to at most $2\ell$ edges of $\text{OPT}_c$. It follows that at most $2\ell$ fewer edges of the set $\text{OPT}_c$ are contained in $G_i$ compared to $G_{i-1}$. As $G_0$ has at least $\text{OPT}_c-2\ell$ edges from the set $\text{OPT}_c$ of color $c$ and $\text{OPT}_c > 4\ell$, for each of the first $\lceil(\text{OPT}_c-2\ell)/(2\ell)\rceil=\lceil\text{OPT}_c/(2\ell)\rceil-1$ rounds, the algorithm will be able to pick an edge of color $c$. Thus from such a color $c$ with $\text{OPT}_c > 4\ell$, it can safely pick at least $\lceil\text{OPT}_c/(2\ell)\rceil\ge \lceil\text{OPT}_c/(4\ell)\rceil$ edges in total. Now, as $\text{OPT}=\sum_c \text{OPT}_c$,
$\sum_{c\in C}  \lceil\text{OPT}_c/(4\ell)\rceil\ge \lceil\text{OPT}/(4\ell)\rceil$.
 It follows that the algorithm can pick at least $\lceil\text{OPT}/(4\ell)\rceil$ edges. As we stop the algorithm as soon as the size of $M$ reaches to $\lceil\text{OPT}/(4\ell)\rceil$, the lemma follows.       
\end{proof}

Note that the claimed approximation factor trivially follows from the above lemma. Next, we show that $M$ is $(\alpha,\beta)$-balanced modulo a small factor that asymptotically tends to 1 with the size of $\text{OPT}$. 

\begin{lemma}\label{lem:lowerbnd}
For each color $c\in C$, $|M_c|\ge |\emph{\text{OPT}}_{c^*}|/(4\ell)$. 
\end{lemma}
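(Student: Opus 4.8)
The plan is to bound $|M_c|$ by counting how many of the algorithm's rounds contribute a color-$c$ edge, splitting on the size of $\text{OPT}_{c^*}$. Throughout I would use the inequality already established inside the proof of Observation~\ref{obs:ab}, namely $\text{OPT} = \sum_{c\in C}\text{OPT}_c \ge \ell\cdot\text{OPT}_{c^*}$, equivalently $\text{OPT}_{c^*}/(4\ell) \le \text{OPT}/(4\ell^2)$. The easy case is $\text{OPT}_{c^*}\le 4\ell$: then $\text{OPT}_{c^*}/(4\ell)\le 1$, and since the rainbow matching $M'$ from Theorem~\ref{thm:rainbow} lies inside $M$ and carries exactly one edge of every color, we get $|M_c|\ge 1\ge \text{OPT}_{c^*}/(4\ell)$ for all colors $c$ at once.

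In the main case $\text{OPT}_{c^*} > 4\ell$, minimality of $c^*$ forces $\text{OPT}_c \ge \text{OPT}_{c^*} > 4\ell$ for every color, so Lemma~\ref{lem:matchingsize} is available for all of them. Fix $c$ and put $s = \lceil \text{OPT}_{c^*}/(4\ell)\rceil - 1$; since $\text{OPT}_c\ge\text{OPT}_{c^*}$ we have $s \le \lceil\text{OPT}_c/(4\ell)\rceil - 1$, so by Lemma~\ref{lem:matchingsize} a color-$c$ edge survives into $G_{i-1}$ for each $i\in[1,s]$. I would then check that the Round procedure in fact \emph{selects} a color-$c$ edge in each such round: at most $2\ell$ edges of the matching $\text{OPT}_c$ are destroyed per round, so enough of them persist to keep a color-$c$ edge available even after the earlier colors of the same round have been processed. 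Granting that these $s$ rounds run to completion, each contributes one edge to $M_c$, and together with the edge of $M'$ this gives $|M_c| \ge s + 1 = \lceil\text{OPT}_{c^*}/(4\ell)\rceil \ge \text{OPT}_{c^*}/(4\ell)$.

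The hard part is exactly the clause ``granting that these $s$ rounds run to completion.'' Because all $\ell$ colors are available throughout rounds $1,\dots,s$, every one of them is a full round of $\ell$ edges, so after $M'$ and these rounds the matching has size $\ell(s+1)=\ell\lceil\text{OPT}_{c^*}/(4\ell)\rceil$; the rounds therefore all finish before the global cap $\lceil\text{OPT}/(4\ell)\rceil$ stops the algorithm provided $\ell\lceil\text{OPT}_{c^*}/(4\ell)\rceil \le \lceil\text{OPT}/(4\ell)\rceil$, which I would try to extract from $\text{OPT}\ge\ell\,\text{OPT}_{c^*}$. This is the delicate step: the ``continuous'' inequality $\text{OPT}/(4\ell)\ge \ell\,\text{OPT}_{c^*}/(4\ell)$ is immediate, but the ceilings and the possibly-incomplete final round mean one must track carefully how the stopping rule spreads the last few edges across colors---in particular across the colors handled last in the fixed order $1,\dots,\ell$---to be sure no color is short-changed by an additive round. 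I expect this last-round bookkeeping, rather than the round-by-round availability argument, to be where the proof needs the most care.
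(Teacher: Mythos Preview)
Your easy case $\text{OPT}_{c^*}\le 4\ell$ matches the paper. In the main case, however, the paper does not attempt to argue directly that the first $s$ rounds all run to completion; it argues by contradiction, and this is precisely what dissolves the ceiling difficulty you flagged.

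Concretely, the paper supposes $|M_c| < \text{OPT}_{c^*}/(4\ell)$ for some $c$. Since Lemma~\ref{lem:matchingsize} guarantees a color-$c$ edge in $G_{i-1}$ for each of the first $\lceil\text{OPT}_{c^*}/(4\ell)\rceil-1$ rounds, the only way $M_c$ can fall short is that the algorithm hit its global cap $\lceil\text{OPT}/(4\ell)\rceil$ during one of these early rounds, say round $i$. But then \emph{every} color $c'$ was processed in at most $i$ rounds too, so $|M_{c'}|\le 1+i \le \text{OPT}_{c^*}/(4\ell)$ for all $c'\ne c$. Summing gives
\[
|M| \;<\; (\ell-1)\cdot\frac{\text{OPT}_{c^*}}{4\ell}+\frac{\text{OPT}_{c^*}}{4\ell}
\;=\;\frac{\text{OPT}_{c^*}}{4}\;\le\;\frac{\text{OPT}}{4\ell}\;\le\;\Big\lceil\frac{\text{OPT}}{4\ell}\Big\rceil,
\]
contradicting the first part of Lemma~\ref{lem:matchingsize}. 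The crucial point is that $\ell\cdot\text{OPT}_{c^*}\le\text{OPT}$ is applied \emph{before} any ceiling is taken.

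Your direct route, by contrast, needs $\ell\lceil\text{OPT}_{c^*}/(4\ell)\rceil \le \lceil\text{OPT}/(4\ell)\rceil$, and that inequality can genuinely fail (take $\ell=3$, $\text{OPT}_{c^*}=13$, $\text{OPT}=39$: then the left side is $6$ and the right side is $4$). So the ``last-round bookkeeping'' you were worried about is not merely delicate---it is an actual obstruction to the direct argument. The fix is a change of viewpoint rather than more careful tracking: once you know the algorithm stopped early, bound all $|M_{c'}|$ from above simultaneously and derive a contradiction on the total size.
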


\begin{proof}
If $\text{OPT}_{c^*}\le 4\ell$, $|M_c| \ge 1 \ge \text{OPT}_{c^*}/(4\ell)$. So, assume that  $\text{OPT}_{c^*}> 4\ell$. Now suppose $|M_c|<  \text{OPT}_{c^*}/(4\ell)$ for some $c$. By Lemma \ref{lem:matchingsize}, for each of the first $\lceil\text{OPT}_c/(4\ell)\rceil-1\ge \lceil\text{OPT}_{c^*}/(4\ell)\rceil-1$ rounds, $G_{i-1}$ contains an edge of color $c$. It follows that the algorithm was forcibly terminated in some round $i \le (\text{OPT}_{c^*}/(4\ell))-1$. Thus, the number of edges chosen from each color $c'\ne c$ is at most $\text{OPT}_{c^*}/(4\ell)$.
Hence,
\begin{align*}
|M|  &=\sum_{c'\ne c}|M_{c'}|+|M_c| \\&< (\ell-1)\cdot (\text{OPT}_{c^*}/(4\ell))+(\text{OPT}_{c^*}/(4\ell))\\&  \le \lceil\text{OPT}/(4\ell)\rceil.
\end{align*}
This contradicts Lemma \ref{lem:matchingsize}, which states that we select exactly $\lceil \text{OPT}/(4\ell)\rceil$ edges.  
\end{proof}

\begin{corollary} \label{cor:lowerbound}
For each color $c\in C$, $(|M_c|/|M|) \ge \frac{\alpha}{(1+4\ell/\emph{\text{OPT}})}$.  
\end{corollary}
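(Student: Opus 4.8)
The plan is to derive the corollary as a direct consequence of the two preceding lemmas, together with the defining inequality of an $(\alpha,\beta)$-balanced matching. First I would translate the bound of Lemma~\ref{lem:lowerbnd} from $\text{OPT}_{c^*}$ into one involving $\alpha$. Since OPT is $(\alpha,\beta)$-balanced, every color satisfies $\alpha \le |\text{OPT}_c|/\text{OPT}$, and applying this to the minimum color $c^*$ gives $\text{OPT}_{c^*} \ge \alpha\cdot\text{OPT}$. Plugging this into Lemma~\ref{lem:lowerbnd} yields, for every color $c\in C$,
\[
|M_c| \;\ge\; \frac{\text{OPT}_{c^*}}{4\ell} \;\ge\; \frac{\alpha\cdot\text{OPT}}{4\ell}.
\]

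Next I would invoke Lemma~\ref{lem:matchingsize}, which fixes $|M| = \lceil \text{OPT}/(4\ell)\rceil$, and combine it with the elementary estimate $\lceil \text{OPT}/(4\ell)\rceil \le \text{OPT}/(4\ell) + 1$. Dividing the displayed inequality by $|M|$ then gives
\[
\frac{|M_c|}{|M|} \;\ge\; \frac{\alpha\cdot\text{OPT}/(4\ell)}{\text{OPT}/(4\ell)+1}.
\]
Writing $x = \text{OPT}/(4\ell)$, the right-hand side equals $\frac{\alpha x}{x+1} = \frac{\alpha}{1 + 1/x} = \frac{\alpha}{1 + 4\ell/\text{OPT}}$, which is exactly the claimed bound.

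There is no genuine obstacle here, as the statement follows immediately from Lemmas~\ref{lem:lowerbnd} and~\ref{lem:matchingsize} and the definition of balancedness. The only step demanding a little care is the treatment of the ceiling: one must upper-bound $|M| = \lceil \text{OPT}/(4\ell)\rceil$ by $\text{OPT}/(4\ell)+1$ rather than simply pretend $|M| = \text{OPT}/(4\ell)$, since it is precisely this $+1$ that produces the clean multiplicative slack factor $1 + 4\ell/\text{OPT}$. Everything else is routine algebraic manipulation.
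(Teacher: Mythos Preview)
Your proof is correct and essentially identical to the paper's: both use Lemma~\ref{lem:lowerbnd} for the numerator, Lemma~\ref{lem:matchingsize} together with the bound $\lceil \text{OPT}/(4\ell)\rceil \le \text{OPT}/(4\ell)+1$ for the denominator, and the balancedness inequality $\text{OPT}_{c^*} \ge \alpha\cdot\text{OPT}$. The only cosmetic difference is that you apply the last inequality before dividing, whereas the paper carries $\text{OPT}_{c^*}$ through the algebra and replaces $\text{OPT}_{c^*}/\text{OPT}$ by $\alpha$ at the very end.
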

\begin{proof}
	By Lemma \ref{lem:lowerbnd}, $|M_c|\ge \text{OPT}_{c^*}/(4\ell)$.
	\begin{align*}
		\frac{|M_c|}{|M|}\ge\frac{(\text{OPT}_{c^*}/(4\ell))}{\lceil\text{OPT}/(4\ell)\rceil}\ge\frac{(\text{OPT}_{c^*}/(4\ell))}{(\text{OPT}/(4\ell))+1}\\
		=\frac{(\text{OPT}_{c^*})/(\text{OPT})}{(1+4\ell/\text{OPT})}\ge \frac{\alpha}{(1+4\ell/\text{OPT})}.
	\end{align*}
	The last inequality follows as \text{OPT} satisfies the lower bound for all colors. 
\end{proof}

Now we turn to proving the upper bound. Let ${\alpha}^*=\text{OPT}_{c^*}/\text{OPT}$. 
\begin{lemma}\label{lem:upperbound1}
For each color $c\in C$, $|M_c|\le \frac{\beta}{{\alpha}^*}\cdot (\emph{\text{OPT}}_{c^*}/(4\ell))+1$.
\end{lemma}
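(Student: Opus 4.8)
The plan is to control $|M_c|$ through the number of rounds the algorithm executes. Write $R$ for the total number of rounds (the last one possibly partial). Since $M$ is initialized to the rainbow matching $M'$, which contains exactly one edge of each color, and since every round adds at most one edge of each color (each round produces a rainbow matching), I immediately get the basic inequality $|M_c|\le R+1$ for every color $c$. The whole argument then reduces to controlling $R$. The key realization is that the uniform lower bound of Lemma~\ref{lem:lowerbnd} (each $|M_{c'}|\ge \text{OPT}_{c^*}/(4\ell)$) is too weak here: summing it over the other colors only yields $|M_c|\le (\text{OPT}-(\ell-1)\text{OPT}_{c^*})/(4\ell)+1$, which need not be bounded by $\beta\,\text{OPT}/(4\ell)+1$. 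Instead I would use the sharper per-color availability guarantee of Lemma~\ref{lem:matchingsize}, combined with a dichotomy on the size of $R$.

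Let $c_{\max}$ be a color maximizing $\text{OPT}_{c'}$, and recall that since $\text{OPT}$ is $(\alpha,\beta)$-balanced we have $\text{OPT}_{c_{\max}}\le \beta\,\text{OPT}$. I would split into two cases. In the first case, $R\le \text{OPT}_{c_{\max}}/(4\ell)$; then directly $|M_c|\le R+1\le \text{OPT}_{c_{\max}}/(4\ell)+1\le \beta\,\text{OPT}/(4\ell)+1$, and we are done.

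In the second case, $R> \text{OPT}_{c_{\max}}/(4\ell)$. Since $R$ is an integer and $\text{OPT}_{c'}\le \text{OPT}_{c_{\max}}$ for every color, this forces $R\ge \lceil \text{OPT}_{c'}/(4\ell)\rceil$ for all $c'$. Hence for every $c'\ne c$, the rounds $1,\ldots,\lceil \text{OPT}_{c'}/(4\ell)\rceil-1$ are all complete rounds, and Lemma~\ref{lem:matchingsize} guarantees that $G_{i-1}$ contains an edge of color $c'$ in each of them (for colors with $\text{OPT}_{c'}\le 4\ell$ the bound $|M_{c'}|\ge 1\ge \text{OPT}_{c'}/(4\ell)$ is trivial). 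Thus color $c'$ is picked in each such round, giving $|M_{c'}|\ge \lceil \text{OPT}_{c'}/(4\ell)\rceil\ge \text{OPT}_{c'}/(4\ell)$. Summing over $c'\ne c$ and using $|M|=\lceil \text{OPT}/(4\ell)\rceil$ from Lemma~\ref{lem:matchingsize},
\[
|M_c| = \lceil \text{OPT}/(4\ell)\rceil - \sum_{c'\ne c}|M_{c'}| \le \Big(\tfrac{\text{OPT}}{4\ell}+1\Big) - \tfrac{\text{OPT}-\text{OPT}_c}{4\ell} = \tfrac{\text{OPT}_c}{4\ell}+1 \le \tfrac{\beta\,\text{OPT}}{4\ell}+1,
\]
where the last step uses $\text{OPT}_c\le \beta\,\text{OPT}$. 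Finally, in both cases I would rewrite $\beta\,\text{OPT}/(4\ell)=\tfrac{\beta}{\alpha^*}\cdot \tfrac{\text{OPT}_{c^*}}{4\ell}$ using $\alpha^*=\text{OPT}_{c^*}/\text{OPT}$, which is exactly the stated bound.

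The main obstacle is precisely seeing that a single global lower bound on the remaining colors does not suffice, and that one needs the round-count dichotomy: either $R$ is itself directly small (hence bounded by $\beta\,\text{OPT}/(4\ell)$), or $R$ is large enough that Lemma~\ref{lem:matchingsize} certifies that \emph{every} remaining color reaches its full $\text{OPT}_{c'}/(4\ell)$ quota, squeezing $|M_c|$ from above. Secondary care is needed with the ceiling/floor bookkeeping and with the fact that only the first $\lceil \text{OPT}_{c'}/(4\ell)\rceil-1$ rounds — all of which are complete rounds — carry the availability guarantee, so the possibly partial final round must be excluded when counting guaranteed picks.
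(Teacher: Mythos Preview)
Your proposal is correct and follows essentially the same approach as the paper: both arguments hinge on the observation $|M_c|\le R+1$ and then use Lemma~\ref{lem:matchingsize} to show that once the number of rounds exceeds $\beta\,\mathrm{OPT}/(4\ell)=(\beta/\alpha^*)\cdot \mathrm{OPT}_{c^*}/(4\ell)$, every color $c'$ has already accumulated at least $\mathrm{OPT}_{c'}/(4\ell)$ edges, squeezing $|M_c|$ via the total budget $|M|=\lceil \mathrm{OPT}/(4\ell)\rceil$. The paper casts this as a short proof by contradiction (assuming $|M_c|$ too large forces too many rounds and hence $|M|>\lceil \mathrm{OPT}/(4\ell)\rceil$), whereas you unfold it into a direct two-case argument with more explicit floor/ceiling and round-completeness bookkeeping; the underlying idea is the same.
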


\begin{proof}
Suppose for some $c\in C$, $|M_c|> \frac{\beta}{{\alpha}^*}\cdot (\text{OPT}_{c^*}/(4\ell))+1$. Then the number of rounds is strictly greater than $\frac{\beta}{{\alpha}^*}\cdot (\text{OPT}_{c^*}/(4\ell))$. Now, for any $c'$, $\text{OPT}_{c'} \ge  {\alpha}^* \cdot \text{OPT}$ and $\text{OPT}_{c'} \le  \beta \cdot \text{OPT}$. Thus, by the definition of ${\alpha}^*$, $\frac{\beta}{{\alpha}^*}\cdot \text{OPT}_{c^*}\ge \text{OPT}_{c'}$. It follows that, for each $c'$, the number of rounds is strictly greater than $\text{OPT}_{c'}/(4\ell)$. Hence, for each $c'\in C$, more than $(\text{OPT}_{c'}/(4\ell))+1$ edges have been chosen. Thus, the total number of edges chosen is strictly larger than \[\sum_{c'\in C} ((\text{OPT}_{c'}/(4\ell))+1) \ge  \lceil \text{OPT}/(4\ell)\rceil.\] This contradicts Lemma \ref{lem:matchingsize}, which states that we select exactly $\lceil \text{OPT}/(4\ell)\rceil$ edges.  
\end{proof}

\begin{corollary} \label{cor:upperbound}
For each color $c\in C$, $(|M_c|/|M|) \le \beta\cdot (1+\frac{4\ell}{\emph{OPT}})$.  
\end{corollary}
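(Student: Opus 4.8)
The plan is to obtain the corollary by directly combining the two preceding lemmas, in exactly the same way the lower-bound Corollary~\ref{cor:lowerbound} was derived from Lemma~\ref{lem:lowerbnd}. The real work has already been done in establishing Lemma~\ref{lem:upperbound1}, so what remains is a short algebraic manipulation mirroring the lower-bound argument.

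First I would invoke Lemma~\ref{lem:matchingsize}, which pins down the denominator exactly as $|M| = \lceil \text{OPT}/(4\ell)\rceil$; in particular $|M| \ge \text{OPT}/(4\ell)$, which is the estimate I will actually use. Next I would take the numerator bound from Lemma~\ref{lem:upperbound1}, namely $|M_c| \le \frac{\beta}{\alpha^*}\cdot(\text{OPT}_{c^*}/(4\ell)) + 1$. The one genuinely useful observation is that, by the definition $\alpha^* = \text{OPT}_{c^*}/\text{OPT}$, the awkward ratio collapses as $\text{OPT}_{c^*}/\alpha^* = \text{OPT}$, so the per-color bound becomes the clean expression $|M_c| \le \beta\,\text{OPT}/(4\ell) + 1$.

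With both pieces in hand I would divide, using the lower bound on $|M|$:
\[
\frac{|M_c|}{|M|} \;\le\; \frac{\beta\,\text{OPT}/(4\ell) + 1}{\text{OPT}/(4\ell)} \;=\; \beta + \frac{4\ell}{\text{OPT}}.
\]
This is the mirror image of the computation in Corollary~\ref{cor:lowerbound}: there the ceiling was bounded from above by $\text{OPT}/(4\ell)+1$ to produce the factor $1/(1+4\ell/\text{OPT})$, whereas here I instead bound the ceiling from below by $\text{OPT}/(4\ell)$, which yields the additive slack $4\ell/\text{OPT}$.

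The step I expect to be the main obstacle is re-expressing this additive slack in the multiplicative form $\beta\cdot(1+4\ell/\text{OPT})$ claimed in the statement, so as to match the symmetric violation factor recorded for the lower bound. I would handle this by factoring $\beta$ out of $\beta + 4\ell/\text{OPT} = \beta\bigl(1 + \tfrac{4\ell}{\beta\,\text{OPT}}\bigr)$ and controlling the residual term, exploiting that in the regime under analysis $\text{OPT} > 4\ell^2$, so that $4\ell/\text{OPT}$ is small, together with Observation~\ref{obs:ab}, which guarantees $\beta \ge 1/\ell$ is bounded away from $0$. This final bookkeeping is the one place where care is needed, since it converts the clean additive error into the stated multiplicative violation factor; every preceding step is routine substitution.
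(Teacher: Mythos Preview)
Your final ``bookkeeping'' step does not go through. After correctly deriving $|M_c|/|M| \le \beta + 4\ell/\text{OPT}$, you want to conclude $|M_c|/|M| \le \beta(1+4\ell/\text{OPT}) = \beta + \beta\cdot 4\ell/\text{OPT}$. But since $\beta \le 1$, the target quantity $\beta + \beta\cdot 4\ell/\text{OPT}$ is \emph{smaller} than your bound $\beta + 4\ell/\text{OPT}$; the inequality points the wrong way. Your proposed remedies do not rescue this: invoking $\beta \ge 1/\ell$ from Observation~\ref{obs:ab} only gives $4\ell/(\beta\,\text{OPT}) \le 4\ell^2/\text{OPT}$, not $4\ell/\text{OPT}$, and the hypothesis $\text{OPT} > 4\ell^2$ merely says $4\ell^2/\text{OPT} < 1$, which is far from what you need. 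The underlying obstruction is that in order to absorb the additive $+1$ from Lemma~\ref{lem:upperbound1} as a multiplicative factor attached to $\beta$, one must arrange for $\beta$ (or rather $\beta/\alpha^*$) to already sit as a common factor of the whole numerator.

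This is exactly where the paper's argument differs from yours: instead of collapsing $\text{OPT}_{c^*}/\alpha^*$ to $\text{OPT}$ \emph{before} dealing with the $+1$, the paper first weakens the $+1$ to $+\beta/\alpha^*$ (valid because $\alpha^* \le \beta$, hence $\beta/\alpha^* \ge 1$), so that $\beta/\alpha^*$ becomes a common factor of the entire numerator. Only after pulling this factor out does it divide through and use $\text{OPT}_{c^*}/\text{OPT} = \alpha^*$ to cancel the $\alpha^*$, leaving $\beta$ times an expression of the desired $(1+\text{error})$ shape. That reordering---amplify the $+1$ first, simplify second---is the missing idea in your sketch.
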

\begin{proof}
	By Lemma \ref{lem:upperbound1}, 
	\begin{align*}
		\frac{|M_c|}{|M|} & \le \frac{({\beta}/{{\alpha}^*})\cdot (\text{OPT}_{c^*}/(4\ell))+1}{\lceil\text{OPT}/(4\ell)\rceil}\\
		&\le \frac{({\beta}/{{\alpha}^*})\cdot (\text{OPT}_{c^*}/(4\ell))+({\beta}/{{\alpha}^*})}{\text{OPT}/(4\ell)}\\
		& =\frac{\beta}{{\alpha}^*}\cdot \frac{\text{OPT}_{c^*}}{\text{OPT}}\cdot  \bigg(1+\frac{4\ell}{\text{OPT}}\bigg)\\
		&= \frac{\beta}{{\alpha}^*}\cdot{\alpha}^*\bigg(1+\frac{4\ell}{\text{OPT}}\bigg)= \beta\cdot \bigg(1+\frac{4\ell}{\text{OPT}}\bigg).
	\end{align*}
	The second inequality follows, as ${\alpha}^*\le \beta$ or $\beta/{\alpha}^* \ge 1$. 
\end{proof}

Now let us remove the assumption that we know the size of an optimal solution. Note that $\ell\le \text{OPT}\le n$. We probe all values between $\ell$ and $n$, and for each such value $T$ run our algorithm. For each matching $M$ returned by the algorithm, we check whether $M$ is $(\frac{\alpha}{(1+4\ell/T)}, \beta\cdot (1+\frac{4\ell}{T}))$-balanced. If this is the case, then we keep this solution. Otherwise, we discard the solution. Finally, we select a solution of the largest size among the ones not discarded. By the above analysis, with $T=\text{OPT}$, the matching returned satisfies the desired lower and upper bounds, and has size exactly $\lceil\text{OPT}/(4\ell)\rceil$. Finally, the running time of our algorithm is dominated by $2^{\ell} n^{O(1)}$ time to compute a rainbow matching algorithm, as stated in Theorem \ref{thm:rainbow}. 
\begin{theorem}
There is a $2^{\ell}\cdot n^{O(1)}$ time $(1/4\ell,1+{4\ell}/{\emph{OPT}})$-approximation algorithm for \probpropfair with $\alpha > 0$. 
\end{theorem}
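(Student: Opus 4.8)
The plan is to assemble the lemmas already established in this section into the two guarantees demanded by the definition of a $(\gamma,\Delta)$-approximation, with $\gamma=1/(4\ell)$ and $\Delta=1+4\ell/\text{OPT}$: that the returned matching $M$ satisfies $|M|\ge \text{OPT}/(4\ell)$, and that every color occupies a fraction of $M$ between $\alpha/\Delta$ and $\beta\cdot\Delta$. I would first argue correctness under the temporary assumption that $\text{OPT}$ is known, splitting on the threshold $4\ell^2$ exactly as the algorithm does.

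In the branch $\text{OPT}\le 4\ell^2$, the algorithm returns the rainbow matching $M'$, whose existence is guaranteed by \Cref{obs:rainbowexists} and which \Cref{thm:rainbow} computes in time $2^\ell n^{O(1)}$. Here $|M'|=\ell$ with exactly one edge per color, so $|M_c|/|M|=1/\ell$, which lies in $[\alpha,\beta]$ by \Cref{obs:ab}; hence $M$ is in fact $(\alpha,\beta)$-balanced, a fortiori within the factor $\Delta$. The size bound follows since $\ell\ge \text{OPT}/(4\ell)$ whenever $\text{OPT}\le 4\ell^2$. In the branch $\text{OPT}>4\ell^2$, the size guarantee $|M|=\lceil\text{OPT}/(4\ell)\rceil\ge\text{OPT}/(4\ell)$ is exactly \Cref{lem:matchingsize}, and the two balance bounds $|M_c|/|M|\ge \alpha/\Delta$ and $|M_c|/|M|\le\beta\cdot\Delta$ are precisely \Cref{cor:lowerbound} and \Cref{cor:upperbound}. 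Combining the two branches establishes the claimed approximation whenever $\text{OPT}$ is known.

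To remove the assumption, I would run the algorithm once for each guess $T\in\{\ell,\ell+1,\dots,n\}$ (recall $\ell\le\text{OPT}\le n$), test for each returned matching whether it is $(\alpha/(1+4\ell/T),\,\beta(1+4\ell/T))$-balanced, discard it otherwise, and among the survivors output one of maximum size. At the correct guess $T=\text{OPT}$ the analysis above certifies a survivor of size $\lceil\text{OPT}/(4\ell)\rceil$, so the selected matching has size at least $\text{OPT}/(4\ell)$. The delicate point, and the step I expect to need the most care, is verifying that the \emph{selected} (largest) survivor meets the balance bounds for the true $\Delta=1+4\ell/\text{OPT}$, since a survivor kept under an underestimate $T<\text{OPT}$ only satisfies a weaker relaxation. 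This is resolved by noting that the $T$-run stops after at most $\lceil T/(4\ell)\rceil$ edges, so by monotonicity of the ceiling any survivor strictly larger than $\lceil\text{OPT}/(4\ell)\rceil$ must come from an overestimate $T>\text{OPT}$, whose relaxation factor $1+4\ell/T<\Delta$ is tighter and hence implies the required bounds; and if the maximum survivor size equals $\lceil\text{OPT}/(4\ell)\rceil$ we may simply output the $T=\text{OPT}$ survivor itself. Finally, since each of the $O(n)$ runs is dominated by the single invocation of \Cref{thm:rainbow} (the peeling rounds being clearly polynomial), the total running time is $2^\ell n^{O(1)}$, as required.
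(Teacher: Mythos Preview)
Your argument is essentially the paper's own: you split on the threshold $4\ell^2$, invoke \Cref{obs:rainbowexists}, \Cref{obs:ab}, \Cref{lem:matchingsize}, \Cref{cor:lowerbound}, \Cref{cor:upperbound} in exactly the same roles, and remove the knowledge-of-$\text{OPT}$ assumption by the same guess-and-filter loop over $T\in\{\ell,\ldots,n\}$. You are in fact more careful than the paper in flagging the selection issue---that the largest survivor might arise from an underestimate $T<\text{OPT}$ and hence only pass a looser balance test---which the paper simply glosses over.

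One small quibble with your resolution of that point: the tie-case rule ``output the $T=\text{OPT}$ survivor itself'' is not implementable, since $\text{OPT}$ is precisely what you do not know. The clean fix is to break size ties in favor of the largest $T$; then in the tie case the selected survivor has $T\ge\text{OPT}$ (because the $T=\text{OPT}$ run is itself a survivor of that size), whence its relaxation factor $1+4\ell/T\le\Delta$ already implies the required bounds. With that tiebreak the rest of your argument goes through unchanged.
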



\section{A Polynomial-time Approximation in the $\beta$-limited Case} \label{sec:beta-limited}
In the $\beta$-limited case, again we make use of the \textbf{Round} procedure. But, the algorithm is slightly different. Most importantly, we do not apply the algorithm in Theorem \ref{thm:rainbow} in the beginning. Thus, our algorithm runs in polynomial time. 

\medskip\noindent\textbf{The Algorithm.} Assume that we know the size of OPT. If OPT $\le 2\ell$, we pick any edge and return it as the solution. Otherwise,  we just greedily pick matched edges in rounds using the \textbf{Round} procedure with the following two cautions. If for a color, at least $\beta\cdot \text{OPT}/(2\ell)$ edges have already been chosen, do not choose any more edge of that color. If at least $(\text{OPT}/{2\ell})-1$ edges have already been chosen, terminate. 

Now we analyze the algorithm. First note that if  OPT $\le 2\ell$, the returned matching has only one edge. The upper bound is trivially satisfied and also we obtain a $2\ell$-approximation. Henceforth, we assume that OPT $ > 2\ell$.  Before showing the correctness and analysis of the approximation factor, we show the upper bound for each color. Again let $M$ be the computed matching and $M_c=M\cap E_c$. Later we prove the following lemma. 

\begin{lemma}\label{lem:uba0}
	The algorithm always returns a matching of size at least $(\emph{OPT}/{2\ell})-1$. 
\end{lemma}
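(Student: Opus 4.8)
The plan is to split on how the algorithm halts. By construction it stops for exactly one of two reasons: either the second caution fires, i.e.\ it has already collected at least $(\text{OPT}/2\ell)-1$ edges, or it performs a complete pass through all $\ell$ colors (a full \textbf{Round}) without being able to add a single edge. In the first case the bound is immediate from the stopping rule, so the entire content of the lemma lies in the second, ``stuck'', case: I must show the algorithm can never get blocked while holding fewer than $(\text{OPT}/2\ell)-1$ edges. I will in fact establish the slightly stronger inequality $|M|\ge \text{OPT}/(2\ell)$ whenever it gets stuck; the additive $-1$ in the statement is then merely the slack coming from the caution-2 stopping threshold.

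For the stuck case, note that in the final (blocked) pass every color $c$ was skipped, so each $c$ is in one of two states: either \emph{saturated}, meaning $|M_c|\ge \beta\cdot\text{OPT}/(2\ell)$, or \emph{exhausted}, meaning $G'$ contains no edge of color $c$ at all. The target is the uniform per-color inequality $\text{OPT}_c\le 2\ell\,|M_c|$ in both states. For a saturated color this is free: since $\text{OPT}$ is $\beta$-balanced, $\text{OPT}_c\le \beta\cdot\text{OPT}=2\ell\cdot(\beta\,\text{OPT}/(2\ell))\le 2\ell\,|M_c|$. The substance is the exhausted case, and here the key counting observation is that a single \textbf{Round} deletes few edges of any fixed color class of $\text{OPT}$: a round chooses at most $\ell$ edges, spanning at most $2\ell$ vertices, and since $\text{OPT}$ (hence $\text{OPT}_c$) is a matching, each such vertex kills at most one edge of $\text{OPT}_c$; thus at most $2\ell$ edges of $\text{OPT}_c$ vanish per round, and after $i$ rounds at most $2\ell i$ of them are gone. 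Consequently, as long as $2\ell i<\text{OPT}_c$ some edge of $\text{OPT}_c$ still survives in $G'$, so color $c$ is available; being unsaturated in this state, the \textbf{Round} procedure then necessarily picks a color-$c$ edge in that round. Hence color $c$ is picked in each of the first $\lceil\text{OPT}_c/(2\ell)\rceil$ rounds, giving $|M_c|\ge \lceil\text{OPT}_c/(2\ell)\rceil\ge \text{OPT}_c/(2\ell)$, i.e.\ $\text{OPT}_c\le 2\ell\,|M_c|$ once more; that an exhausted color really has seen that many rounds follows from the same survival bound, as it could not have caused a block before round $\lceil\text{OPT}_c/(2\ell)\rceil$.

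Summing $\text{OPT}_c\le 2\ell\,|M_c|$ over all colors gives $\text{OPT}=\sum_c \text{OPT}_c\le 2\ell\sum_c|M_c|=2\ell\,|M|$, hence $|M|\ge \text{OPT}/(2\ell)\ge (\text{OPT}/2\ell)-1$, which completes the stuck case and thus the lemma. The main obstacle—and the conceptual heart of the argument—is the exhausted case: the crude bound ``all $\text{OPT}_c$ edges of an exhausted color are deleted, and deletions total at most $2|M|$'' only yields $\sum_{\text{exhausted}}\text{OPT}_c\le 2|M|$, which, combined with the saturated colors, loses a factor and gives the weaker $|M|\ge\text{OPT}/(2\ell+2)$. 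The feature I must exploit is that \textbf{Round} revisits every color on each pass, so an exhausted color with large $\text{OPT}_c$ cannot have been cheaply killed off—it must itself have contributed about $\text{OPT}_c/(2\ell)$ edges to $M$ before running dry. The one technical subtlety to handle carefully is that edges chosen earlier within the same round can delete color-$c$ edges before color $c$'s turn; tracking this within-round bookkeeping (or simply charging it to the caution-2 slack) is what produces the harmless additive constant.
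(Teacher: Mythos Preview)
Your overall plan---split on the two stopping reasons, then in the ``stuck'' case classify colors as saturated or exhausted---matches the paper's proof. The saturated case is handled identically. The genuine gap is in your treatment of the exhausted case.

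You aim for the per-color inequality $\text{OPT}_c \le 2\ell\,|M_c|$ for every exhausted unsaturated color, and you correctly identify the obstacle: edges picked earlier \emph{within the same round} can delete the surviving $\text{OPT}_c$-edge before color $c$'s turn. You then dismiss this as a ``harmless additive constant'' to be absorbed into the single $-1$ slack. This is where the argument fails. The within-round loss is additive \emph{per color}: your count really gives only $|M_c|\ge \text{OPT}_c/(2\ell)-O(1)$, and a small example shows the per-color inequality can genuinely fail (e.g.\ with $\ell=2$ and $\text{OPT}_c=5$, three picks---two of the other color sandwiching one of color $c$---can wipe out all of $\text{OPT}_c$, leaving $|M_c|=1<5/4$). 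Summing the per-color bound over up to $\ell$ exhausted colors therefore yields only $|M|\ge \text{OPT}/(2\ell)-O(\ell)$, not $\text{OPT}/(2\ell)-1$; a single global $-1$ cannot absorb $\ell$ separate losses.

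The paper's proof avoids exactly this accumulation by \emph{not} arguing per color. It lets $C'$ be the set of unsaturated colors, sets $\text{OPT}'=\sum_{c\in C'}\text{OPT}_c$, and bounds $\sum_{c\in C'}|M_c|$ in one shot by $\lfloor \text{OPT}'/(2\ell)\rfloor$, using that each round removes at most $2\ell$ edges of $\text{OPT}'$ in total. The single floor contributes one additive $-1$, which combined with $\sum_{c\in C_1}|M_c|\ge \sum_{c\in C_1}\text{OPT}_c/(2\ell)$ gives $|M|\ge \text{OPT}/(2\ell)-1$. So the fix is to aggregate the exhausted (unsaturated) colors rather than bound each $|M_c|$ separately.
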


Assuming this we have the following observation.  

\begin{observation}
	For each color $c\in C$, $|M_c|/|M|\le \beta\cdot (1+\frac{2\ell}{|\emph{OPT}|})$. 
\end{observation}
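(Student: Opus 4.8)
The plan is to bound the numerator $|M_c|$ and the denominator $|M|$ separately and then combine the two estimates, mirroring the closing argument of Corollary~\ref{cor:upperbound} but in the simpler setting where there is no lower bound $\alpha$ to keep track of. For the denominator I would simply invoke Lemma~\ref{lem:uba0}, which guarantees $|M| \ge \text{OPT}/(2\ell) - 1$. The real content is therefore an upper bound on $|M_c|$ coming from the first caution of the algorithm, together with the arithmetic that converts the two one-sided estimates into the claimed multiplicative slack $1 + 2\ell/\text{OPT}$.

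First I would establish the numerator bound $|M_c| \le \beta\cdot\text{OPT}/(2\ell) + 1$. The key structural fact is that a single invocation of the \textbf{Round} procedure contributes at most one edge of color $c$ to $M$, since within a round the colors are processed one at a time and an edge of color $c$ is selected at most once; hence $|M_c|$ increases by at most one from one round to the next. By the first caution, the algorithm refuses to add a further edge of color $c$ as soon as $|M_c| \ge \beta\cdot\text{OPT}/(2\ell)$, so at the moment the last color-$c$ edge is added we still had strictly fewer than $\beta\cdot\text{OPT}/(2\ell)$ such edges; after that addition we get $|M_c| < \beta\cdot\text{OPT}/(2\ell) + 1$, and since $|M_c|$ is an integer this yields $|M_c| \le \lceil \beta\cdot\text{OPT}/(2\ell)\rceil \le \beta\cdot\text{OPT}/(2\ell) + 1$.

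Finally I would combine the two bounds. Dividing the numerator estimate by $|M| \ge \text{OPT}/(2\ell) - 1$ gives
\begin{equation*}
\frac{|M_c|}{|M|} \le \frac{\beta\cdot\text{OPT}/(2\ell) + 1}{\text{OPT}/(2\ell) - 1},
\end{equation*}
and it remains to see that the right-hand side is at most $\beta\bigl(1 + \tfrac{2\ell}{\text{OPT}}\bigr)$. I expect this last algebraic step to be the main obstacle: the additive $+1$ in the numerator and the $-1$ in the denominator (inherited from Lemma~\ref{lem:uba0}) must both be absorbed into the single multiplicative factor $1 + 2\ell/\text{OPT}$, and reconciling these constants is exactly the delicate point where the guarantee becomes meaningful only for large $\text{OPT}$ — recall that the regime $\text{OPT}\le 2\ell$ is disposed of separately by returning one edge. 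Concretely, writing $x = \text{OPT}/(2\ell) > 1$ (using $\text{OPT} > 2\ell$ to keep both denominators positive) and clearing denominators reduces the statement to a polynomial inequality in $x$, $\beta$, and $\ell$, which is the step requiring care; the overall shape of the computation parallels the closing chain of Corollary~\ref{cor:upperbound}, with $4\ell$ replaced by $2\ell$ throughout and the factor $\beta/\alpha^*$ dropped, since the $\beta$-limited case imposes no lower bound.
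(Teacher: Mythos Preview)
Your outline is the same as the paper's: bound $|M_c|$ via the first caution, bound $|M|$ from below via Lemma~\ref{lem:uba0}, and divide. The paper is terser and uses the slightly sharper $|M_c|\le\beta\cdot\text{OPT}/(2\ell)$ (treating the threshold as a hard cap rather than allowing your extra $+1$), but structurally the two arguments are identical.

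Your hesitation about the last algebraic step is well founded: that step does not actually go through, neither in your version nor in the paper's. With $x=\text{OPT}/(2\ell)>1$, the paper's final inequality is
\[
\frac{\beta x}{x-1}\ \le\ \beta\Bigl(1+\frac{1}{x}\Bigr),
\]
which simplifies to $x^2\le x^2-1$ and is false for every $x$; adding your $+1$ in the numerator only makes the left side larger. What the available estimates genuinely yield is
\[
\frac{|M_c|}{|M|}\ \le\ \beta\cdot\frac{x}{x-1}\ =\ \beta\Bigl(1+\frac{2\ell}{\text{OPT}-2\ell}\Bigr),
\]
so the provable slack factor is $1+2\ell/(\text{OPT}-2\ell)$ rather than the claimed $1+2\ell/\text{OPT}$. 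This does not harm the qualitative message (the factor still tends to $1$ as $\text{OPT}\to\infty$, and the ensuing theorem survives with a mildly adjusted constant), but the Observation as stated cannot be derived from the two bounds at hand; the polynomial inequality you were planning to verify is in fact false, so you were right to flag it.
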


\begin{proof}
	By Lemma \ref{lem:uba0} and the threshold put on each color in the algorithm, 
	\begin{align*}
		\frac{|M_c|}{|M|}\le \frac{\beta\cdot \text{OPT}/(2\ell)}{(\text{OPT}/{2\ell})-1}\le \beta\cdot (1+\frac{2\ell}{\text{OPT}})
	\end{align*}
	The last inequality follows, as $\text{OPT}> 2\ell$. 
\end{proof}

Next, we prove Lemma \ref{lem:uba0}. 

\begin{proof}
	Let $C_1$ be the subset of colors such that for each $c\in C_1$, the algorithm picks at least $\beta\cdot \text{OPT}/(2\ell)$ edges. Note that the algorithm can terminate in two ways (i) it has already picked at least $(\text{OPT}/{2\ell})-1$ edges, and (ii) all the edges have been exhausted. Note that if (i) happens, then we are done. We prove that (ii) cannot happen without (i). Suppose (ii) happens, but not (i). Let OPT$'$ be the subset of OPT containing edges of colors in $C'=C\setminus C_1$. Recall that $G_{i-1}$ is the input graph to the $i$-th round and $G_i$ is the output graph for $i \ge 1$. The number of edges chosen in $i$-th round is at most $\ell$. Hence, these edges can be adjacent to at most $2\ell$ edges in $G_{i-1}$. In particular, at most $2\ell$ less edges of OPT$'$ are contained in $G_i$ compared to $G_{i-1}$. It follows that the algorithm can pick at least $\lfloor$OPT$'/{2\ell}\rfloor$ edges of colors in $C'$. As for each color in $C'$, less than $\beta\cdot \text{OPT}/(2\ell)$ edges are chosen, the algorithm indeed chooses at least $\lfloor$OPT$'/{2\ell}\rfloor$ edges of these colors. The total number of edges chosen by the algorithm is, 
	\begin{align*}
		\sum_{c\in C_1} |M_c|+\sum_{c\in C'} |M_c| & \ge \sum_{c\in C_1} \beta\cdot \text{OPT}/(2\ell) + \lfloor\text{OPT}'/{2\ell}\rfloor  \\ 
		& \ge \sum_{c\in C_1} \text{OPT}_c/{2\ell} + \lfloor\text{OPT}'/{2\ell}\rfloor\\ & \ge  (\text{OPT}/{2\ell})-1
	\end{align*}
	But, this is a contradiction to our assumption, and hence the lemma follows. 
\end{proof}

\begin{theorem}
	There is a polynomial time algorithm for \probpropfair in the $\beta$-limited case that returns a matching of size at least $(\emph{OPT}/{2\ell})-1$ where every color appears in at most $\beta\cdot (1+{2\ell}/{\emph{OPT}})$ fraction of the edges. 
\end{theorem}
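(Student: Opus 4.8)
The plan is to assemble the statement from the two quantitative guarantees already proved for the $\beta$-limited algorithm, and then to discharge the assumption that $|\text{OPT}|$ is known in advance. For the main regime $\text{OPT} > 2\ell$, the theorem is exactly the conjunction of Lemma~\ref{lem:uba0} and the accompanying Observation: Lemma~\ref{lem:uba0} supplies the size bound $|M| \ge (\text{OPT}/2\ell) - 1$ (which is also the claimed $2\ell$-approximation), and the Observation supplies $|M_c|/|M| \le \beta\,(1 + 2\ell/\text{OPT})$ for every color $c$. So in this regime no new inequality is needed beyond citing these two facts.

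First I would dispose of the base case $\text{OPT} \le 2\ell$, where the algorithm returns a single edge. The size requirement is immediate since $(\text{OPT}/2\ell) - 1 \le 0 \le |M|$, and the relaxed per-color upper bound is satisfied because the slack factor $1 + 2\ell/\text{OPT} \ge 2$ available in this regime absorbs the fact that a single edge concentrates on one color class. This is consistent with the $2\ell$-approximation already noted for this case.

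Next I would remove the assumption that $\text{OPT}$ is known, mirroring the argument used in the $\alpha > 0$ case. Since $1 \le \text{OPT} \le n$, I would run the algorithm once for each guess $T \in \{1, \ldots, n\}$, obtaining a matching $M_T$, keep only those $M_T$ that pass the explicit test $|M_T| \ge (T/2\ell) - 1$ and $|M_T \cap E_c|/|M_T| \le \beta\,(1 + 2\ell/T)$ for all $c$ (checkable in polynomial time), and return the largest surviving matching. With $T = \text{OPT}$ the two guarantees above certify that the corresponding run survives the filter, so the returned matching has size at least $(\text{OPT}/2\ell) - 1$; I would then invoke the monotonicity of both thresholds in $T$ to argue that the per-color fraction of the returned matching is bounded by $\beta\,(1 + 2\ell/\text{OPT})$ and not merely by the value attached to its own winning guess.

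Finally, for the running time I would note the key structural difference from the $\alpha > 0$ algorithm: no call to Theorem~\ref{thm:rainbow} is made. Each round of the Round procedure is a single greedy pass over the $\ell$ colors that repeatedly selects an edge and deletes its incident edges, hence polynomial; the number of rounds is at most $|E|$; and the outer loop over the $n$ guesses for $T$ contributes only a polynomial factor. I expect the main obstacle to lie not in any single inequality but in this last bookkeeping step, namely ensuring that the guess-and-filter procedure reports a matching whose per-color fraction is controlled by $\beta\,(1 + 2\ell/\text{OPT})$ rather than by the looser quantity attached to the winning guess; handling this cleanly is what requires tracking how the size bound and the per-color cap move together as $T$ varies.
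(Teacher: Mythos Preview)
Your proposal mirrors the paper's argument: combine Lemma~\ref{lem:uba0} with the accompanying Observation for the regime $\text{OPT} > 2\ell$, handle $\text{OPT} \le 2\ell$ by returning a single edge, and note that the absence of any call to Theorem~\ref{thm:rainbow} gives polynomial running time. The paper does not explicitly spell out in this section how to drop the assumption that $\text{OPT}$ is known; your guess-and-filter over $T \in \{1,\ldots,n\}$ is the natural analogue of what Section~\ref{sec:prop-approx} does for the $\alpha > 0$ case, and your observation that the only delicate point is the monotonicity bookkeeping tying the winning guess back to $\text{OPT}$ is accurate.

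One caution about the base case that is worth flagging (and which the paper's text shares): the assertion that the slack factor $1 + 2\ell/\text{OPT} \ge 2$ ``absorbs'' the concentration of a single edge requires $\beta\,(1 + 2\ell/\text{OPT}) \ge 1$, and this need not hold. For instance, with $\beta = 1/\ell$ and $\text{OPT} = 2\ell$ (each of the $\ell$ colors contributing exactly two edges to the optimum), one has $\beta\,(1 + 2\ell/\text{OPT}) = 2/\ell < 1$ for $\ell \ge 3$, yet a single edge has $|M_c|/|M| = 1$ for its own color. The paper's phrase ``the upper bound is trivially satisfied'' is making the same unjustified leap, so you are not doing worse than the paper here; but as written this step does not go through without an adjustment (e.g., returning the empty matching when $\text{OPT} \le 2\ell$, which vacuously meets both the size bound $(\text{OPT}/2\ell)-1 \le 0$ and the per-color bound).
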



\section{An Exact Algorithm for \probpropfair}\label{sec:exact}

\begin{theorem}
	There is a $2^{O(k)} n^{O(1)}$-time algorithm that either finds a solution of size $k$ for a \probpropfair instance, or determines that none exists.
	\label{thm:fpt}
\end{theorem}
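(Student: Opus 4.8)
The plan is to first convert the balance requirement into a purely combinatorial constraint on color counts, and then to solve the resulting problem by color-coding. Since a solution $M$ must have size exactly $k$, the requirement $\alpha \le |M\cap E_c|/|M| \le \beta$ for every color $c$ is equivalent to $|M\cap E_c|\in[L,U]$, where $L\coloneqq\lceil\alpha k\rceil$ and $U\coloneqq\lfloor\beta k\rfloor$. So the task becomes: decide whether $G$ has a matching of size $k$ in which the number of edges of each color $c$ lies in the interval $[L,U]$, and output one if so. I would assume this reformulation throughout.

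A key preliminary step is to reduce the number of relevant colors to at most $k$. Any size-$k$ matching uses at most $k$ distinct colors, so if $\ell\le k$ nothing needs to be done. If $\ell>k$ and $L\ge 1$, then a valid matching would need at least one edge of each of the $\ell>k$ colors, which is impossible, and I would immediately report that no solution exists. The remaining case is $\ell>k$ with $L=0$, i.e.\ only an upper bound is present. Here I would apply color-coding to the color set: draw a random $h\colon C\to[k]$ and recolor each edge $e$ by $h(\chi(e))$. If a solution $M^*$ exists it uses a set $T$ of at most $k$ colors, and with probability at least $k!/k^k\ge e^{-k}$ the map $h$ is injective on $T$; repeating $e^{k}$ times, or deterministically ranging over a perfect hash family of size $2^{O(k)}\log\ell$, I may assume $h$ is injective on $T$. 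One then checks both directions: under such an $h$, $M^*$ has at most $U$ edges of each new color, and conversely any size-$k$ matching with at most $U$ edges of each new color has at most $U$ edges of each original color. Hence it suffices to solve the recolored, upper-bound-only instance, which now has at most $k$ colors.

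It then remains to solve, in $2^{O(k)}n^{O(1)}$ time, an instance with $\ell'\le k$ colors: find a size-$k$ matching whose color profile lies in $\prod_{c}[L,U]$. I would apply color-coding to the $2k$ vertices of the sought matching. Coloring $V$ uniformly at random with $2k$ colors makes a fixed size-$k$ matching \emph{colorful} (all $2k$ endpoints distinctly colored) with probability at least $(2k)!/(2k)^{2k}\ge e^{-2k}$, and this is derandomized by a perfect hash family $V\to[2k]$ of size $2^{O(k)}\log n$ \cite{alon1995color}. For a fixed coloring $\phi$ I run a dynamic program over states $(S,p)$, where $S\subseteq[2k]$ is the set of vertex-colors already used and $p\in\mathbb{Z}_{\ge 0}^{\ell'}$ records how many edges of each color have been taken, subject to $\sum_c p_c=|S|/2$ and $p_c\le U$. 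Setting $D[\emptyset,\mathbf 0]=\true$ and, for an edge $uv$ of color $c$ with $\phi(u),\phi(v)\in S$ distinct, letting $D[S,p]$ read off $D[S\setminus\{\phi(u),\phi(v)\},\,p-\mathbf 1_c]$, a profile-feasible colorful matching of size $k$ exists iff some $D[S,p]$ with $|S|=2k$ and every $p_c\in[L,U]$ is $\true$. The number of profiles $p$ with $\sum_c p_c\le k$ over $\ell'\le k$ colors is at most $\binom{2k}{k}=2^{O(k)}$, so there are $2^{2k}\cdot 2^{O(k)}=2^{O(k)}$ states, each processed in polynomial time.

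The main obstacle is the interaction between the two layers of colors: the dynamic program must remember the entire edge-color profile to certify both the lower and the upper bounds, and the number of such profiles is $2^{O(k)}$ only when the number of colors is $O(k)$. Controlling this is immediate when $\alpha>0$, since the lower bound already forces $\ell\le k$, but in the $\beta$-limited case it is exactly what the color-set hashing step buys us; the delicate point is verifying the correctness of that reduction in both directions, in particular that recoloring never manufactures a feasible matching that is infeasible for the original coloring. Combining the three steps and multiplying the sizes of the two perfect hash families by the dynamic-programming running time yields the claimed $2^{O(k)}n^{O(1)}$ bound.
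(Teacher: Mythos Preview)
Your proposal is correct, but the route differs from the paper's in a meaningful way. The paper reduces both cases directly to \emph{rainbow matching} (Theorem~\ref{thm:rainbow}) by color-coding the \emph{edges}: in the $\beta$-limited case each original color class $E_c$ is split into $\lfloor\beta k\rfloor$ fresh sub-colors, and in the $\alpha>0$ case one first guesses the profile $(k_1,\ldots,k_\ell)$ and splits each $E_{c_j}$ into $k_j$ fresh sub-colors; a size-$k$ rainbow matching in the refined coloring is then exactly a feasible fair matching, and the success probability is bounded color-by-color via a short induction. You instead (i) hash the \emph{color set} $C\to[k]$ in the $\beta$-limited case to force $\ell'\le k$, and (ii) run vertex color-coding with a DP whose state carries the full edge-color profile. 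The paper's approach is cleaner in that it invokes the rainbow-matching black box and avoids any explicit DP; its only enumeration is over the $2^{O(k)}$ profiles when $\alpha>0$. Your approach is more self-contained, does not need Theorem~\ref{thm:rainbow}, and lets the DP absorb the profile enumeration, at the cost of the extra color-hashing layer and the $\binom{2k}{k}$-sized profile table. Both lines are standard color-coding and give the same $2^{O(k)}n^{O(1)}$ bound.
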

\begin{proof}
	We present two different algorithms using the well-known technique of color coding: one for the case $\alpha = 0$ ($\beta$-limited case), and one for the case $\alpha > 0$.
	\vspace{-0.2cm}
	\subparagraph*{$\beta$-limited case.} 
	We aim to reduce the problem to finding a rainbow matching of size $k$, which we then solve via Theorem~\ref{thm:rainbow}.
	The graph $G$ remains the same, however the coloring is going to be different. Namely, for each of the original colors $c \in C$
	we color the edges in $E_c$ uniformly and independently at random from a set of $k'$ new colors, where $k' = \lfloor\beta k\rfloor$.
	Thus, the new instance $I'$ is colored in  $\ell \cdot k'$ colors. We use the algorithm of Theorem~\ref{thm:rainbow} to find a rainbow matching of size $k$ in the colored graph in $I'$. Clearly, if a rainbow matching $M$ of size $k$ is found, then the same matching $M$ is a $\beta$-limited matching of size $k$ in the original coloring. This holds since by construction for any original color $c \in C$, there are $k'$ new colors in the edge set $E_c$, and therefore no more than $k'$ edges in $|M \cap E_c|$.
	
	In the other direction, we show that if there exists a $\beta$-limited matching $M$ of size $k$ with respect to the original coloring, then with good probability $M$ is a rainbow matching of size $k$ in the new coloring. Assume the original colors $c_1$, \ldots, $c_t$, for some $1 \le t \le \ell$, have non-empty intersection with $M$, and for each $j \in [t]$ denote $k_j = |M \cap E_{c_j}|$. Observe that $\sum_{j = 1}^t k_j = k$, and for each $j \in [t]$, $1 \le k_j \le k'$.
	\begin{claim}\label{cl:probability}
		There exists some $\delta > 0$ such that for each $j \in [t]$:
		\begin{equation}
			\prob \left[M \cap \left(\bigcup_{i = 1}^j E_{c_i} \right) \text{ is a rainbow matching in } I'\right] \nonumber \ge \exp\lr{-\delta \sum_{i = 1}^j k_i},
			\label{eq:color_prob}
		\end{equation}
	\end{claim}
	\begin{proof}
		We prove the claim by induction on $j$. For the base case, clearly (\ref{eq:color_prob}) holds for $j = 0$. Now, fix $j \in [t]$ and assume the statement holds for each $j' < j$, we show that (\ref{eq:color_prob}) also holds for $j$. Consider the $k_j$ edges of $M \cap E_{c_j}$, they are colored uniformly and independently in $k' \ge k_j$ colors. By counting possible colorings  of $M \cap E_{c_j}$, it follows that
		\begin{align*}
			\prob \left[M \cap E_{c_j} \text{ is a rainbow matching }\right] \\
			\ge \frac{(k')! / (k' - k_j)!}{(k')^{k_j}}\ge \frac{k_j!}{k_j^{k_j}} \ge 2^{-\delta k_j},
		\end{align*}
		where the last bound is by Stirling's formula. Now, since colors used for $E_{c_{j}}$ do not appear anywhere else, using the inductive hypothesis we get
		\begin{align*}
			&\prob \left[M \cap \left(\bigcup_{i = 1}^j E_{c_i} \right) \text{ is a rainbow matching }\right] \\= & \prob \left[M \cap \left(\bigcup_{i = 1}^{j - 1} E_{c_i} \right) \text{ is a rainbow matching }\right] 
			\\&\cdot\prob \left[M \cap E_{c_j} \text{ is a rainbow matching }\right]
			\\\ge & 2^{-\delta \sum_{i = 1}^{j - 1} k_i} \cdot 2^{-\delta k_j} = 2^{-\delta \sum_{i = 1}^{j} k_i}.
		\end{align*}
	\end{proof}

	Applying (\ref{eq:color_prob}) with $j = t$, we obtain that $M$ is a rainbow matching with probability at least $2^{-\delta k}$. By repeating the reduction above $2^{O(k)}$ times independently, we achieve that the algorithm succeeds with constant probability.
	\vspace{-0.2cm}
	\subparagraph*{The case $\alpha > 0$.}
	We observe that in this case, if a matching is fair it necessarily contains at least one edge from each of the groups. Thus, if the number of groups $\ell$ is greater than $k$, we immediately conclude there cannot be a fair matching of size $k$. Otherwise, we guess how the desired $k$ edges are partitioned between the $\ell$ groups $C = \{c_1, \ldots, c_\ell\}$. That is, we guess the numbers $k_j$ for $j \in [\ell]$ such that 
	$\sum_{j = 1}^\ell k_j = k$, and $\alpha k \le k_j \le \beta k$ for each $j \in [\ell]$. From now on, the algorithm is very similar to the $\beta$-limited case. For each group $c_j$, we color the edges of $E_{c_j}$ from a set of $k_j$ colors  uniformly and independently at random, where the colors used for each $E_{c_j}$ are non-overlapping.
	Now we use the algorithm of Theorem~\ref{thm:rainbow} to find a rainbow matching of size $k$. If there is a rainbow matching $M$ of size $k$, the same matching is a fair matching of size $k$ for the original instance, since in each $E_{c_j}$ exactly $k_j$ edges are chosen, which is at least $\alpha k$ and at most $\beta k$. In the other direction, if there is a fair matching $M$ of size $k$ in the original instance, by \eqref{eq:color_prob} the matching $M$ is a rainbow matching in the new instance with probability at least $2^{-\delta k}$. Again, by repeating the coloring subprocess independently $2^{O(k)}$ times, we achieve a constant probability of success. Since there are $2^{O(k)}$ options for partitioning $k$ edges into at most $\ell\le k$ groups, the running time of the whole algorithm is $2^{O(k)} n^{O(1)}$.
	
	Finally, we note that the coloring part in both cases can be derandomized in the standard fashion by using perfect hash families~\cite{NaorSS95}, leading to a completely deterministic algorithm.
\end{proof}



\section{Hardness of Approximation for \textsc{Proportionally Fair Matching}} \label{sec:hardness}
In this section, we show  an inapproximability result for \probpropfair under the Exponential Time Hypothesis (ETH) \cite{impagliazzo2001complexity}. ETH states that  $2^{\Omega(n)}$ time is needed to solve any generic 3SAT instance with $n$ variables. For our purpose, we need the following restricted version of 3SAT. 

\medskip\noindent\textbf{3SAT-3}\\
INPUT: Set of clauses $T=\{C_1,\ldots,C_m\}$ in variables $x_1,\ldots,x_n$, each clause being the disjunction of 3 or 2 literals, where a literal is a variable $x_i$ or its negation $\bar{x}_i$. Additionally, each variable appears 3 times.\\
QUESTION: Is there a truth assignment that simultaneously satisfies all the clauses? 

3SAT-3 is known to be \np-hard \cite{Yannakakis78}. We need the following stronger lower bound for 3SAT-3 proved in \cite{cygan2017hitting}.

\begin{proposition}[\cite{cygan2017hitting}]\label{prop:3sat3}
	Under ETH, 3SAT-3 cannot be solved in $2^{o(n)}$ time. 
\end{proposition}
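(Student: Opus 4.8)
The plan is to derive the lower bound by composing two reductions, starting from the standard form of ETH stated in the excerpt: general 3SAT on $n$ variables cannot be solved in $2^{o(n)}$ time. The first step is to pass to \emph{sparse} instances via the celebrated Sparsification Lemma of Impagliazzo, Paturi and Zane~\cite{ImpagliazzoPZ01}: for every $\varepsilon > 0$, a 3SAT formula on $n$ variables can be written, in time $2^{\varepsilon n}\,\poly(n)$, as a disjunction of at most $2^{\varepsilon n}$ formulas, each on the same $n$ variables but with only $O_\varepsilon(n)$ clauses. A standard diagonalization over $\varepsilon \to 0$ then shows that, under ETH, even 3SAT restricted to instances with a linear number of clauses $m = O(n)$ requires $2^{\Omega(n)}$ time, since solving each sparse disjunct in $2^{o(n)}$ time and taking the OR would solve general 3SAT in $2^{o(n)}$ time.

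The second step is a gadget reduction from sparse 3SAT to 3SAT-3 that inflates the number of variables only by a constant factor. Given a formula with $n$ variables and $m = O(n)$ clauses, for each variable $x_i$ occurring $d_i$ times I would introduce $d_i$ fresh copies $y_{i,1}, \ldots, y_{i,d_i}$ and replace the $j$-th occurrence of $x_i$ by $y_{i,j}$, preserving the sign. To force all copies of a variable to agree, I add the cyclic chain of implication clauses $(\bar y_{i,j} \lor y_{i,j+1})$ for $j = 1, \ldots, d_i$, with indices taken modulo $d_i$. Each copy $y_{i,j}$ then occurs exactly three times: once in the substituted original clause, and twice in the two adjacent implication clauses of its cycle (negatively in clause $j$, positively in clause $j-1$). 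The original clauses have $2$ or $3$ literals and the implication clauses have $2$ literals, so the output is a valid 3SAT-3 instance; variables with very few occurrences are handled by trivial padding of the cycle.

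Correctness is immediate: implications around a cycle force all copies of $x_i$ to take a common value, so a satisfying assignment of one formula yields one of the other by copying, and conversely by contracting each group of copies. The crucial accounting is on the number of variables of the produced instance, $N = \sum_i d_i \le 3m = O(n)$, where the linearity of $m$ in $n$ — guaranteed precisely by the sparsification step — is exactly what keeps $N = O(n)$.

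Combining the two reductions completes the argument: a hypothetical $2^{o(N)}$-time algorithm for 3SAT-3 would solve each sparse instance in $2^{o(N)} = 2^{o(n)}$ time, contradicting the lower bound for sparse 3SAT from the first step, and hence ETH. The step I expect to be the main obstacle is the sparsification reduction — not the cloning gadget, which is routine, but securing the linear clause bound, since without it the cloning construction would create $\Theta(m)$ variables and the $2^{o(n)}$ lower bound would degrade. The remaining checks (occurrence counts, the degenerate small-$d_i$ cases, and satisfiability equivalence) are mechanical.
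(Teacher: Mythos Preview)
Your argument is correct and is exactly the standard route to this lower bound: Sparsification Lemma to get $m=O(n)$ clauses, then the variable-copying cycle gadget to bound occurrences at three while keeping the variable count linear. The accounting $N=\sum_i d_i\le 3m=O(n)$ is the key quantitative step, and you identify it correctly.

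As for comparison with the paper: there is nothing to compare, because the paper does not prove Proposition~\ref{prop:3sat3}. It is stated as a citation of~\cite{cygan2017hitting} and used as a black box for the subsequent reduction to \probpropfair. (An earlier draft, visible in the commented-out \texttt{\textbackslash iffalse} block in the introduction, apparently intended to include a self-contained proof via the Sparsification Lemma, but this was dropped in favor of the citation.) Your sketch is essentially the argument one finds in the cited reference, so in that sense you have reconstructed the intended proof rather than diverged from it.
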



We reduce 3SAT-3 to \probpropfair which rules out any approximation for the latter problem in $2^{o(\ell)}n^{O(1)}$ time. Our reduction is as follows. For each clause $C_i$, we have a color $i$. Also, we have $n-1$ additional colors ${m+1},\ldots,{m+n-1}$. Thus, the set of colors $C=\{1,\ldots,{m+n-1}\}$. For each variable $x_i$, we construct a gadget, which is a 3-path (a path with 3 edges). Note that $x_i$ can either appear twice in its normal form or in its negated form, as it appears 3 times in total. Let $C_{i_1}, C_{i_2}$ and $C_{i_3}$ be the clauses where $x_i$ appears. Also, suppose it appears in $C_{i_1}$ and $C_{i_3}$ in one form, and in $C_{i_2}$ in the other form. We construct a 3-path $P_i$ for $x_i$ where the $j$-th edge has color $i_j$ for $1\le j\le 3$. Additionally, we construct $n-1$ 3-paths $Q_{i,i+1}$ for $1\le i\le n-1$. All edges of $Q_{i,i+1}$ is of color $m+i$. Finally, we glue together all the paths in the following way to obtain a single path. For each $1\le i\le n-1$, we glue $Q_{i,i+1}$ in between $P_i$ and $P_{i+1}$ by identifying the last vertex of $P_i$ with the first vertex of $Q_{i,i+1}$ and the last vertex of $Q_{i,i+1}$ with the first vertex of $P_{i+1}$. Thus we obtain a path $P$ with exactly $3(n+n-1)=6n-3$ edges. Finally, we set $\alpha = \beta = 1/(m+n-1)$. 

\begin{lemma}
\label{lem:3sat3iffab}
	There is a satisfying assignment for the clauses in 3SAT-3 if and only if there is an $(\alpha,\beta)$-balanced matching of size at least $m+n-1$.  
\end{lemma}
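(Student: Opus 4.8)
The plan is to first replace the balanced-matching condition by a clean combinatorial one, and then prove both directions against that reformulation. Since here $\alpha=\beta=1/\ell$ with $\ell=m+n-1$ equal to the number of colors, any $(\alpha,\beta)$-balanced matching $M$ must satisfy $|M\cap E_c|=|M|/\ell$ for every color $c$; in particular $\ell$ divides $|M|$ and $M$ uses the same number of edges of each color. The whole instance is a single path with $6n-3$ edges, so every matching has size at most $\lceil(6n-3)/2\rceil=3n-1$. Because each variable occurs three times and each clause has two or three literals, counting literal occurrences gives $n\le m\le 3n/2$, hence $\ell\ge 2n-1$ and $2\ell\ge 4n-2>3n-1$ for $n\ge 2$. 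Therefore a balanced matching of size at least $\ell$ must have size exactly $\ell$ and contain exactly one edge of each color, and the lemma is equivalent to: the formula is satisfiable iff the path admits a matching using exactly one edge of every color.

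For the forward direction I would start from a satisfying assignment $\tau$ and, for each clause $C_j$, fix one literal that $\tau$ makes true (a \emph{witness}); if the witness is the occurrence of variable $x_k$ in $C_j$, I place the corresponding color-$j$ edge of the gadget $P_k$ into $M$. This selects exactly one edge of each clause color $1,\dots,m$. For every connector $Q_{i,i+1}$ I add its \emph{middle} edge, contributing exactly one edge of each color $m+1,\dots,m+n-1$. The key point is that the result is a valid matching: within a gadget $P_i$ all true occurrences of $x_i$ share the same literal form, so the chosen edges lie in $\{e_1,e_3\}$ (the two non-adjacent end edges) or equal $\{e_2\}$, hence are pairwise non-adjacent; and each middle connector edge uses only internal vertices of its connector, so it conflicts with nothing. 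Thus $M$ is a matching with exactly one edge of each of the $\ell$ colors.

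Conversely, given such a matching $M$, for each variable $x_i$ I inspect $M\cap E(P_i)$, which (being a matching in a $3$-path) is either contained in $\{e_1,e_3\}$ or equals $\{e_2\}$, and I set $x_i$ so that the literal form carried by the present edges is true (choosing arbitrarily if $M\cap E(P_i)=\emptyset$). To show every clause $C_j$ is satisfied, I use that its color is covered by exactly one edge $e$, lying in some gadget $P_k$; this edge encodes the occurrence of $x_k$ in $C_j$, and the matching constraint forces $M\cap E(P_k)$ to be precisely the pattern under which $\tau$ sets $x_k$ to the value making that occurrence true, so $C_j$ is satisfied. I expect the main obstacle to be the preliminary rigidity step --- arguing that $\alpha=\beta=1/\ell$ together with the matching and path structure forces $|M|=\ell$ and pins $M\cap E(P_i)$ into one of the two canonical patterns --- together with the care needed in the forward direction to handle clauses satisfied by several literals, which is resolved by committing to a single witness per clause so that each clause color is used exactly, not merely at least, once.
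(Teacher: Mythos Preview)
Your proposal is correct and follows essentially the same two-direction argument as the paper: pick a witness per clause plus the middle connector edges in the forward direction, and read off an assignment from the pattern of $M\cap E(P_i)$ in the backward direction. The one notable difference is your rigidity step: the paper argues that if $|M|=2\ell$ then each $Q_{i,i+1}$ must contribute its first and third edges, which blocks the end edges of every $P_t$ and yields $|M|\le 3n-2$, contradicting $n\le m$; you instead use the global bound that any matching in a $(6n-3)$-edge path has size at most $3n-1<2\ell$, which is a cleaner route to the same conclusion (modulo the harmless caveat $n\ge 2$).
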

\begin{proof}
	Suppose there is a satisfying assignment for all the clauses. For each clause $C_j$, consider a variable, say $x_i$, that satisfies $C_j$. Then there is an edge of color $j$ on $P_i$. Add this edge to a set $M$. Thus, after this step, $M$ contains exactly one edge of color $j$ for $1\le j\le m$. Also, note that for each path $P_i$, if the middle edge is chosen, then no other edge from $P_i$ can be chosen. This is true, as the variable $x_i$ can either satisfy the clauses where it appears in its normal form or the clauses where it appears in its negated form, but not both types of clauses. Hence, $M$ is a matching. Finally, for each path $Q_{i,i+1}$, we add its middle edge to $M$. Note that $M$ still remains a matching. Moreover, $M$ contains exactly one edge of color $j$ for $1\le j\le m+n-1$. As $\alpha = \beta = 1/(m+n-1)$, $M$ is an $(\alpha,\beta)$-balanced matching. 
	
	Now suppose there is an $(\alpha,\beta)$-balanced matching $M$ of size at least $m+n-1$. First, we show that $|M|=m+n-1$. Note that if $|M| > m+n-1$, then the only possibility is that $|M|=2(m+n-1)$, as $\alpha=\beta$ and at most 2 edges of color $j$ can be picked in any matching for $m+1\le j\le m+n-1$. Suppose $|M|=2(m+n-1)$. Then from each $Q_{i,i+1}$, $M$ contains the first and the third edge. This implies, from each $P_t$, $1\le t\le n$, we can pick at most one edge. Thus, total number of edges in $M$ is at most $2(n-1)+n$. It follows that $2m+2n-2\le 2n-2+n$ or $n\ge 2m $. Now, in 3SAT-3 the total number of literals is $3n$ and at most $3m$, as each variable appears 3 times and each clause contains at most 3 literals. This implies $n \le m$, and we obtain a contradiction. Thus, $|M|=m+n-1$. 
	Now, consider any $P_i$. 
	In the first case, the first and third edges of $P_i$ are corresponding to literal $x_i$ and, hence, the middle edge is corresponding to the literal $\bar{x}_i$. If the middle edge is in $M$, assign 0 to $x_i$, otherwise, assign 1 to $x_i$. In the other case, if the middle edge is in $M$, assign 1 to $x_i$, otherwise, assign 0 to $x_i$. We claim that the constructed assignment satisfies all the clauses. Consider any clause $C_j$. Let $e \in P_i$ be the edge in $M$ of color $j$ for $1\le j\le m$. Note that $e$ can be the middle edge in $P_i$ or not. In any case, if $e$ is corresponding to $\bar{x}_i$, we assigned 0 to $x_i$, and if $e$ is corresponding to ${x}_i$, we assigned 1 to $x_i$. Thus, in either case, $C_j$ is satisfied. This completes the proof of the lemma.      
\end{proof}

Note that for a 3SAT-3 instance the total numbers of literals is $3n$. As each clause contains at least 2 literals, $m \le 3n/2$. Now, for the instances constructed in the above proof,  the number of colors $\ell= m+n-1\le 3n/2+n-1=5n/2-1$. Thus, the above lemma along with Proposition \ref{prop:3sat3} show that it is not possible to decide whether there is an $(\alpha,\beta)$-balanced matching of a given size in time $2^{o(\ell)}n^{O(1)}$. Using this, we also show that even no $2^{o(\ell)}n^{O(1)}$ time approximation algorithm is possible. Suppose there is a $2^{o(\ell)}n^{O(1)}$ time $\gamma$-approximation algorithm, where $\gamma < 1$. For our constructed path instances, we apply this algorithm to obtain a matching. Note that the $\gamma$-approximate solution $M$ must contain at least one edge of every color, as $\alpha=\beta$. By the proof in the above lemma, $|M|$ is exactly $m+n-1$. Hence, using this algorithm, we can decide in $2^{o(\ell)}n^{O(1)}$ time whether there is an $(\alpha,\beta)$-balanced matching of size $m+n-1$. But, this is a contradiction, which leads to the following theorem.

\begin{theorem}
	For any $\gamma > 1$, under ETH, there is no $2^{o(\ell)}n^{O(1)}$ time $\gamma$-approximation algorithm for \probpropfair, even on paths. 
\end{theorem}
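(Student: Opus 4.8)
The plan is to transfer the conditional lower bound for the \emph{decision} version, already implicit in Lemma~\ref{lem:3sat3iffab} together with Proposition~\ref{prop:3sat3}, to the approximation setting, exploiting the fact that the reduction sets $\alpha = \beta$.

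First I would record the consequence of Lemma~\ref{lem:3sat3iffab} together with a bound on $\ell$. Since each clause of a 3SAT-3 instance has at least two literals while the total number of literal occurrences is $3n$, we have $m \le 3n/2$, so $\ell = m+n-1 = \Theta(n)$ and in particular $2^{o(\ell)} = 2^{o(n)}$. Hence, by Lemma~\ref{lem:3sat3iffab}, any algorithm deciding whether the constructed path instance admits an $(\alpha,\beta)$-balanced matching of size at least $m+n-1$ in time $2^{o(\ell)} n^{O(1)}$ would decide 3SAT-3 in time $2^{o(n)}$, contradicting Proposition~\ref{prop:3sat3}. This already yields hardness of the exact decision problem; the remaining work is to argue that an approximation algorithm is no weaker.

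The crux is the choice $\alpha = \beta = 1/(m+n-1)$. I would first observe that any \emph{feasible} (that is, $(\alpha,\beta)$-balanced) matching $M$ must contain at least one edge of every color: the lower bound $|M \cap E_c|/|M| \ge \alpha > 0$ forces $|M \cap E_c| \ge 1$ for each of the $m+n-1$ colors, and hence $|M| \ge m+n-1$. Conversely, the reverse direction of the proof of Lemma~\ref{lem:3sat3iffab} shows that a balanced matching of size at least $m+n-1$ in fact has size exactly $m+n-1$ and yields a satisfying assignment. Combining these, the path instance admits a nonempty feasible matching if and only if the 3SAT-3 instance is satisfiable; in that case $\text{OPT} = m+n-1$, and otherwise no nonempty $(\alpha,\beta)$-balanced matching exists at all.

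Finally I would assume, for contradiction, a $\gamma$-approximation algorithm running in $2^{o(\ell)} n^{O(1)}$ time and run it on the constructed path. Such an algorithm outputs a feasible matching whose size is at least a fixed positive fraction of $\text{OPT}$; by the previous paragraph its output is therefore a nonempty (hence size-$(m+n-1)$) balanced matching precisely when the instance is satisfiable, and is empty or infeasible otherwise. Checking which case occurs takes polynomial time, so this decides 3SAT-3 in $2^{o(\ell)} n^{O(1)} = 2^{o(n)}$ time, contradicting Proposition~\ref{prop:3sat3}. The step I expect to require the most care is the third one: pinning down that a strictly positive per-color lower bound forces at least one edge of each of the $m+n-1$ colors, and that the structural path argument of Lemma~\ref{lem:3sat3iffab} (the $n \le m$ versus $n \ge 2m$ dichotomy) rules out any spurious feasible matching when the formula is unsatisfiable, so that the approximation guarantee cannot be met vacuously.
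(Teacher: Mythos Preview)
Your proposal is correct and follows essentially the same approach as the paper: bound $\ell=\Theta(n)$ via $m\le 3n/2$, observe that $\alpha=\beta>0$ forces any feasible matching to hit every color and hence (by the size argument inside Lemma~\ref{lem:3sat3iffab}) to have size exactly $m+n-1$, and conclude that a hypothetical $2^{o(\ell)}n^{O(1)}$-time approximation would decide 3SAT-3 in $2^{o(n)}$ time. The only cosmetic difference is that you phrase the dichotomy as ``nonempty feasible matching iff satisfiable,'' whereas the paper jumps directly to ``the returned matching has size exactly $m+n-1$''; the content is the same.
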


\section{Conclusions} 
\label{sec:conclude} 
In this paper, we study the notion of proportional fairness in the context of matchings in graphs, which has been studied by Chierichetti et al.\ \cite{chierichetti2017fair}. We obtained approximation and exact algorithms for the proportionally fair matching problem. We also complement these results by showing hardness results. It would be interesting to obtain a $o(\ell)$-  or a true $O(\ell)$-approximation for \probpropfair improving our result. As evident from our hardness result, there is a lower bound of $2^{\Omega(\ell)}n^{O(1)}$ on the running time of such an algorithm. 

\paragraph{Acknowledgments.} Most of this work was done when all four authors were affiliated with University of Bergen, Norway. The research leading to these results has received funding from the Research Council of Norway via the project MULTIVAL, and the European Research Council (ERC) via grant LOPPRE, reference 819416.

\bibliographystyle{siam}
\bibliography{ref}

\end{document}